\documentclass[final,a4paper]{siamltex}
\usepackage{amsmath,amssymb,amsfonts}
\usepackage{verbatim}
\usepackage{changepage}
\usepackage{bm,color,url}
\usepackage{algorithm,algorithmic}
\usepackage{geometry}
\usepackage{setspace}
\geometry{left=1.3in,right=1.3in,top=1.3in,bottom=1.3in}

\usepackage[notref,notcite]{showkeys}
\usepackage{hyperref}
%\hypersetup{hypertex=true,
%            colorlinks=true,
%            linkcolor=blue,
%            anchorcolor=blue,
%            citecolor=blue}
%\usepackage{enumerate}
%\bibliographystyle{apalike}
%\usepackage[authoryear, round]{natbib}
%\usepackage{setspace}
%\usepackage{amssymb,amsmath,amsthm,amsfonts}
%\usepackage[section]{algorithm}
%\usepackage{algorithmic}
%\usepackage{enumerate}
%\usepackage{verbatim}
%\usepackage[colorlinks,linktocpage,linkcolor=blue]{hyperref}
%\numberwithin{equation}{section}
%%%%%%%%%%%%%%%%%%%%%%%%%%%%%%%%%%%%%%%%%%%%%%%%%%
%\theoremstyle{plain}% default

\newtheorem{Lemma*}{theorem}
\newtheorem{remark}{Remark}[section]
\newtheorem{remark*}{Remark}

\newcommand{\bZ}{{\bf Z}}
\newcommand{\bI}{\mbox{\bf I}}

%%%%%%%
\def\Ebb{\mathbb{E}}

\title{Convergence Analysis of Schr{\"o}dinger-F{\"o}llmer Sampler without Convexity}
\author{Yuling Jiao
\thanks{School of Mathematics and Statistics, Wuhan University, Wuhan, China.
Email: yulingjiaomath@whu.edu.cn} \quad \and Lican Kang
\thanks{School of Mathematics and Statistics, Wuhan University, Wuhan, China.
Email: kanglican@whu.edu.cn}
\quad\and
Yanyan Liu
\thanks{School of Mathematics and Statistics, Wuhan University, Wuhan, China.
Email: liuyy@whu.edu.cn}
\quad\and Youzhou Zhou \thanks{ Department of Pure mathematics, Xi'an Jiaotong-Liverpool University, Suzhou, China.
Email: youzhou.zhou@xjtlu.edu.cn}
}

\begin{document}
%\begin{spacing}{1.2}

\maketitle

\begin{abstract}
Schr\"{o}dinger-F\"{o}llmer sampler (SFS) \cite{sfs21},  is a novel  and efficient approach for sampling from possibly unnormalized distributions without ergodicity.  SFS is based on the Euler-Maruyama discretization of Schr\"{o}dinger-F\"{o}llmer diffusion process
$$\mathrm{d} X_{t}=-\nabla U\left(X_t, t\right) \mathrm{d} t+\mathrm{d} B_{t}, \quad t \in[0,1],\quad X_0=0$$
 on the unit interval, which transports the degenerate distribution at time zero to the target distribution at time one.
In \cite{sfs21}, the consistency of SFS is established under a restricted assumption that  %the drift term $b(x,t)$
{the potential $U(x,t)$
is uniformly  (on $t$) strongly %concave
convex
(on $x$).
} In this paper we provide   a nonasymptotic error bound of SFS in  Wasserstein distance under  some smooth  and bounded conditions on the
 density ratio of the target distribution over the standard normal distribution,  but   without requiring   the strongly
{convexity of the
 potential.
}
\end{abstract}

\section{Introduction}\label{introduction}
Sampling from  possibly unnormalized distributions  is an  import task in Bayesian statistics and machine learning.
 Ever since the Metropolis-Hastings (MH) algorithm \cite{metropolis1953equation,hastings1970monte} was introduced, various random sampling methods were proposed, including
Gibbs sampler,  random walk sampler, independent sampler, Lagevin sampler,   bouncy particle sampler, zig-zag sampler  \cite{geman1984stochastic,gelfand1990sampling,Tierney1994Markov,liu2008monte,
robert2010introducing,bouchard2018bouncy,bierkens2019zig}, among others, see \cite{brooks2011handbook,dunson2020hastings}  and the references therein. The above mentioned sampling algorithms generate random samples   by running an ergodic Markov chain whose stationary distribution is the target distribution.

In \cite{sfs21}, the  Schr\"{o}dinger-F\"{o}llmer sampler (SFS),  a novel sampling approach  without requiring the property of ergodicity is proposed. SFS is  based on  the  Schr\"{o}dinger-F\"{o}llmer diffusion process, defined as
\begin{align}\label{sde}
\mathrm{d} X_{t}=b\left(X_{t}, t\right) \mathrm{d} t+\mathrm{d} B_{t}, \quad t \in[0,1],\quad X_0=0,
\end{align}
where the drift function $$b(x,t) = -\nabla U(x,t)=\frac{\Ebb_{Z\sim N\left(0,\bI_{p}\right)}[\nabla f(x+\sqrt{1-t}Z)]}{\Ebb_{Z\sim N\left(0,\bI_{p}\right)}[f(x+\sqrt{1-t} Z)]}:\mathbb{R}^p \times [0,1]\rightarrow \mathbb{R}^1$$
 with $f(\cdot)=\frac{d \mu }{d N\left(0,\bI_{p}\right)}(\cdot)$.
According to \cite{leonard2014survey} and \cite{eldan2020}, the process  $\{X_t\}_{t\in [0,1]}$  in (\ref{sde}) was first formulated by F\"{o}llmer \cite{follmer1985, follmer1986, follmer1988} when studying the Schr\"{o}dinger bridge problem \cite{schrodinger1932theorie}.
The main feature of the above  Schr\"{o}dinger-F\"{o}llmer  process is that  it interpolates $\delta_{0}$ and the target $\mu$ in time $[0,1]$, i.e.,  $X_1 \sim \mu$, see Proposition \ref{SBP}.  SFS  samples from $\mu$ via the  following Euler-Maruyama discretization of (\ref{sde}),
$$
Y_{t_{k+1}}=Y_{t_k}+ sb\left(Y_{t_k}, t_k\right)+\sqrt{s}\epsilon_{k+1},
~
Y_{t_0} = 0,~ k=0,1,\ldots, K-1,
$$
where $s = 1/K$ is the  step size, $t_k = s k$,  and $\{\epsilon_{k}\}_{k=1}^{K}$ are independent and identically distributed from $N(0,\bI_{p})$.
 If the expectations in the   drift term $b(x,t)$   do not have  analytical forms,  one can use   Monte Carlo  method   to evaluate
  $b\left(Y_{t_k}, t_k\right)$ approximately, i.e., one can sample from $\mu$ according
 $$
\widetilde{Y}_{t_{k+1}}=\widetilde{Y}_{t_k}+ s\tilde{b}_m\left(\widetilde{Y}_{t_k}, t_k\right)+\sqrt{s}\epsilon_{k+1},
~
\widetilde{Y}_{t_0} = 0, ~k=0,1,\ldots, K-1,$$
  where $\tilde{b}_m(\widetilde{Y}_{t_{k}},t_{k})=
\frac{\frac{1}{m}\sum_{j=1}^m[\nabla f(\widetilde{Y}_{t_{k}}+\sqrt{1-t_{k}}Z_j)]}{\frac{1}{m}\sum_{j=1}^{m} [f(\widetilde{Y}_{t_{k}}+\sqrt{1-t_{k}} Z_j)]}$ with $Z_1,...Z_m$ i.i.d $N(0,\bI_{p})$. The numerical simulations in   \cite{sfs21} demonstrate  that SFS outperforms the exiting samplers based on
egordicity.

In Section 4.2 of \cite{sfs21}, they prove that
$$W_2(\mbox{Law}(\widetilde{Y}_{t_K}),\mu)\rightarrow 0, \ \ \mathrm{as} \ \  s \rightarrow 0, m \rightarrow \infty$$ under a restricted assumption that
%the drift term is uniformly strongly concave, i.e.,
{the potential $U(x,t)$ is uniformly strongly convex, i.e.,
%\begin{equation}\label{cond4}
%b(x,t)-b(y,t)-\nabla b(y,t)^{\top}(x-y)\leq - (M/2)\left\|x-y\right\|^2_2, \forall x,y\in \mathbb{R}^p,  \forall t\in [0,1].
%\end{equation}
\begin{equation}\label{cond4}
U(x,t)-U(y,t)-\nabla U(y,t)^{\top}(x-y)\geq (M/2)\left\|x-y\right\|^2_2, \forall x,y\in \mathbb{R}^p,  \forall t\in [0,1],
\end{equation}
where $M$ is one finite and positive constant.
}
In this paper we provide a new analysis of the above SFS iteration. We establish a nonasymptotic error bound on
$W_2(\mbox{Law}(\widetilde{Y}_{t_K}),\mu)$ under   the condition that  $f$ and $\nabla f$ are Lipschitz continuous and $f$ has positive  lower bound,     but without using  the uniformly strongly convexity requirement (\ref{cond4}).

The rest of this paper is organized as follows. In Section \ref{method},  we  recall the SFS  method.
In Section \ref{Theorey}, we present our  theoretical analysis.
%In Section \ref{simulation},  we conduct some simulations  to illustrate the effectiveness of the  proposed method.
We conclude in Section \ref{conlusion}. Proofs for all the theorems are provided in Appendix \ref{append}.

%We end this section by introducing some notation used throughout the paper.
%Denote $\mathcal{B}(\mathbb{R}^{p})$ as the Borel set of $\mathbb{R}^{p}$,
%and let $\mathcal{P}\left(\mathbb{R}^{p}\right)$
%be the collection of probability measure on $(\mathbb{R}^{p}, \mathcal{B}(\mathbb{R}^{p}))$.
%For the smooth unary  function $\varphi(x)$  on $\mathbb{R}^{p}$,
% let $\nabla \varphi(x)$ be the gradient.
%Similarly,
%we denote $\nabla \phi(x,t)$  as gradient  with respect to $x$, where  $\phi(x,t)$ is a smooth  binary function  defined on
%$\mathbb{R}^{p}\times [0,1]$.
%Let $\|\beta\|_d=(\sum_{i=1}^{p}|\beta_{i}|^d)^{\frac{1}{d}}$ be the usual $d$-norm of the vector $\beta=(\beta_1,\ldots,\beta_p)^{\top}\in \mathbb{R}^{p}$.
%Denote $\Ebb_{x}$ as the expectation over random variable or vector $x$. %and $\Ebb_{x,y}$ as the expectation over independent  random variable or vector  $x$ and $y$.
%%%%%%%%%%%%%%%%%%%%%%%%%%%%%%%%%%%%%%%%%%%%%%%%%%
%%%%%%%%%%%%%%%%%%%%%%%%%%%%%%%%%%%%%%%%%%%%%%%%%%
\section{Schr{\"o}dinger-F{\"o}llmer sampler}\label{method}
%%%%%%%%%%%%%%%%%%%%%%%%%%%%%%%%%%%%%%%%
In this section we  recall the Schr{\"o}dinger-F{\"o}llmer  sampler briefly. More   backgrounds  on the  Schr{\"o}dinger-F{\"o}llmer diffusion process please see \cite{dai1991stochastic,leonard2014survey,chen2021stochastic,sfs21}.

Let $\mu \in \mathcal{P}\left(\mathbb{R}^{p}\right)$ be the target distribution and absolutely continuous with respect to the $p$-dimensional standard normal measure $G=N(0,\bI_{p})$. Let $$f(x)=\frac{d\mu}{dG}(x).$$
We assume that
\begin{itemize}
\item[$(\textbf{A1})$] $f,\nabla f$ are Lipschitz continuous with constant $\gamma$,
\item[$(\textbf{A2})$] There exists $\xi>0$ such that $f\geq \xi$.
\end{itemize}
Define the heat semigroup $Q_t, t\in [0,1]$ as
$$
Q_{t} f(x)=\Ebb_{Z \sim G}[f(x+\sqrt{t} Z)].
$$

\begin{proposition}\label{SBP}
Define a drift function
$$
b(x, t)=\nabla \log Q_{1-t} f(x).
$$
If $f$ satisfies assumptions $(\textbf{A1})$ and $(\textbf{A2})$, then the Schr{\"o}dinger-F\"{o}llmer diffusion
\begin{align}\label{sch-equation}
\mathrm{d} X_{t}=b\left(X_{t}, t\right) \mathrm{d} t+\mathrm{d} B_{t}, \quad t \in[0,1], \quad X_{0}=0,
\end{align}
has a unique strong solution and $X_{1} \sim \mu$.
\end{proposition}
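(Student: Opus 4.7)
The plan is to split the proposition into two independent claims: first, existence and uniqueness of a strong solution on $[0,1]$ via a global Lipschitz condition on the drift; second, identification of the terminal law $X_{1}\sim\mu$ via a Doob $h$-transform argument.

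For the first claim, I would verify that $b(x,t)=\nabla Q_{1-t}f(x)/Q_{1-t}f(x)$ is globally Lipschitz in $x$ uniformly in $t\in[0,1]$. Using $(\textbf{A1})$ and dominated convergence, one may differentiate under the expectation to obtain $\nabla Q_{1-t}f(x)=\mathbb{E}_{Z\sim G}[\nabla f(x+\sqrt{1-t}Z)]$, while $(\textbf{A2})$ gives the uniform lower bound $Q_{1-t}f(x)\geq \xi>0$. The Lipschitz constants of $f$ and $\nabla f$ transfer directly to the numerator and denominator, and a short computation based on the identity
$$
\frac{a}{b}-\frac{a'}{b'}=\frac{(a-a')b'+a'(b'-b)}{b\,b'}
$$
produces a Lipschitz constant for $b$ that depends only on $\gamma$, $\xi$, $\|\nabla f(0)\|$, and $f(0)$, and is independent of $t$. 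Classical It\^o existence and uniqueness for SDEs with Lipschitz coefficients then yields a unique strong solution on $[0,1]$.

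For the second claim, I would introduce the space-time function $h(x,t):=Q_{1-t}f(x)$ and check by direct differentiation under the Gaussian integral that it solves the backward heat equation $\partial_{t}h+\tfrac{1}{2}\Delta h=0$ on $\mathbb{R}^{p}\times[0,1)$ with terminal datum $h(\cdot,1)=f$. Since $h>0$ by $(\textbf{A2})$, the drift can be written as $b=\nabla\log h$, which is precisely the Doob $h$-transform drift that converts standard Brownian motion from $0$ into a process whose time-$1$ density is
$$
\varphi_{G}(x)\cdot\frac{h(x,1)}{h(0,0)}=\varphi_{G}(x)\cdot\frac{f(x)}{\int f\,dG}=\varphi_{G}(x)\,f(x)=\frac{d\mu}{dx}(x),
$$
where $\varphi_{G}$ denotes the standard Gaussian density on $\mathbb{R}^{p}$ and the normalizing denominator equals $1$ because $\mu$ is a probability measure. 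This identifies $X_{1}\sim\mu$.

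The main technical obstacle I expect is the clean justification of the $h$-transform step. Although the formal manipulation is standard, it requires $h\in C^{1,2}$, strict positivity of $h$, and enough integrability of $|\nabla\log h|$ along solution paths to apply Girsanov (or equivalently to apply It\^o's formula to $\log h(X_{t},t)$ and match the time-$1$ marginal directly). Assumptions $(\textbf{A1})$ and $(\textbf{A2})$ are exactly tailored to supply these ingredients: $C^{1,2}$ regularity of $h$ follows from differentiation under the Gaussian integral using Lipschitz $f$ and $\nabla f$; $(\textbf{A2})$ bounds $h$ away from zero on all of $\mathbb{R}^{p}\times[0,1]$; and the linear growth of $b$ that results, together with its Lipschitz continuity, yields the Novikov-type integrability needed for Girsanov. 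With these in place the remainder of the argument is bookkeeping.
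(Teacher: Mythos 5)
Your argument is correct; the one thing to observe is that the paper itself offers no proof of Proposition~\ref{SBP} — it simply cites \cite{dai1991stochastic,lehec2013representation} as a known result. Your proposal therefore cannot match the paper's ``route,'' but it does faithfully reconstruct the standard argument behind that citation.

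Both halves check out. For existence and uniqueness, one point is slightly overstated: you list $\|\nabla f(0)\|$ and $f(0)$ among the quantities the Lipschitz constant of $b$ depends on, but they are not needed. Since $f$ is $\gamma$-Lipschitz, $\nabla f$ is uniformly bounded by $\gamma$, hence $\|\nabla Q_{1-t}f\|\le\gamma$ and $\|\mathrm{Hess}\,Q_{1-t}f\|\le\gamma$ everywhere; together with $Q_{1-t}f\ge\xi$ this gives the global bounds $\|b\|\le\gamma/\xi$ and $\|\nabla b\|\le\gamma/\xi+\gamma^2/\xi^2$ stated in Remark~2.1(iii), uniformly in $t\in[0,1]$. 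That is all the classical It\^o theorem requires. For the terminal law, the $h$-transform computation is exactly right: $h(x,t)=Q_{1-t}f(x)$ is strictly positive, $C^{1,2}$ (by differentiation under the Gaussian integral, justified by (\textbf{A1})), and solves $\partial_t h+\tfrac12\Delta h=0$ with $h(\cdot,1)=f$; the transformed transition kernel at time $1$ from the origin is $\varphi_G(y)\,h(y,1)/h(0,0)$, and $h(0,0)=\int f\,dG=\mu(\mathbb{R}^p)=1$, giving density $\varphi_G(y)f(y)=d\mu/dy$. The Girsanov/Novikov integrability you flag is indeed the only delicate point, and it is immediate here because $b$ is uniformly bounded. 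In short, your proof is a valid and reasonably self-contained derivation of a statement the paper chose to treat as black-boxed.
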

\begin{remark}
\begin{itemize}
{\item[(i)]
From the definition of $b(x,t)$ in Proposition \ref{SBP}, it follows that
$U(x,t)=-\log Q_{1-t}f(x)$.
}
\item[(ii)]
If the target distribution is $\mu(dx)=\exp (-V(x)) d x / C$ with the normalized constant $C$,
then $f(x)=\frac{\left(\sqrt{2\pi}\right)^p}{C}\exp (-V(x)+\frac{\|x\|_2^2}{2})$.
Once $V(x)$ is twice differentiable and
$$
\lim_{R\to\infty}\sup_{\|x\|_2\geq R}\exp \left(-V(x)+\|x\|_2^{2} / 2\right)\|x-\nabla V\|_2<\infty,
$$
$$
\lim_{R\to\infty}\sup_{\|x\|_2\geq R}\exp \left(-V(x)+\|x\|_2^{2} / 2\right)\|Hess(V)\|_2<\infty,
$$
then both $f$ and $\nabla f$ are Lipschitz continuous, i.e., (\textbf{A1}) holds.
(\textbf{A2}) is equivalent to the growth condition on the potential that  $V(x) \leq \frac{\|x\|^2}{2} - \log \xi + \mathrm{contant}$.
\item[(iii)] Under (\textbf{A1}) and (\textbf{A2}), some calculation shows that  $$\|\nabla f\|_2\leq \gamma, \|Hess(f)\|_2\leq\gamma,$$ and
$$
\sup_{x\in\mathbb{R}^p,t\in[0,1]}\|\nabla Q_{1-t}f(x)\|_2\leq \gamma,
\sup_{x\in\mathbb{R}^p,t\in[0,1]}\|Hess(Q_{1-t}f(x))\|_2\leq \gamma,
$$ and
$$
b(x,t)=\frac{\nabla Q_{1-t}f(x)}{Q_{1-t}f(x)}, ~\nabla b(x,t)=\frac{Hess(Q_{1-t}f)(x)}{Q_{1-t}f(x)}-b(x,t)b(x,t)^{\top}.
$$
We conclude that
$$
\sup_{x\in\mathbb{R}^p,t\in[0,1]}\|b(x,t)\|_2\leq \frac{\gamma}{\xi},
\sup_{x\in\mathbb{R}^p,t\in[0,1]}\|\nabla b(x,t)\|_2\leq \frac{\gamma}{\xi}+\frac{\gamma^2}{\xi^2}.
$$
\end{itemize}
\end{remark}
Proposition   \ref{SBP} shows  that the Schr{\"o}dinger-F\"{o}llmer diffusion will transport $\delta_0$ to the target $\mu $ on the unite time interval. %$X_1$ is exactly distributed from the target distribution $\mu$.
Since  drift term $b(x,t)$ is scale-invariant with respect to $f$ in the sense that
$b(x, t)=\nabla \log Q_{1-t} Cf(x), \forall C>0$. Therefore, the Schr{\"o}dinger-F\"{o}llmer diffusion can be used for sampling from $\mu(dx)=\exp (-V(x)) d x/C$,  where  the normalizing constant of  $C$ may not to be known.
To this end, we use the  Euler-Maruyama method to discretize  the  Schr{\"o}dinger-F\"{o}llmer diffusion (\ref{sch-equation}).
Let $$t_{k}=k\cdot s, \ \ ~k=0,1,  \ldots, K, \ \ ~\mbox{with}~ \ \ s=1 / K, \ \ Y_{t_{0}}=0,$$
the Euler-Maruyama scheme reads
\begin{align}\label{emd}
Y_{t_{k+1}}=Y_{t_k}+s b\left(Y_{t_{k}}, t_{k}\right)+\sqrt{s}\epsilon_{k+1},
~
k=0,1,\ldots, K-1,
\end{align}
where   $\{\epsilon_{k}\}_{k=1}^{K}$ are i.i.d. $N(0,\bI_{p})$ and
\begin{align}\label{driftb}
b(Y_{t_{k}},t_{k})=\frac{\Ebb_{Z}[\nabla f(Y_{t_{k}}+\sqrt{1-t_{k}}Z)]}{\Ebb_{Z}[f(Y_{t_{k}}+\sqrt{1-t_{k}} Z)]}=\frac{\Ebb_{Z}[Z f(Y_{t_{k}}+\sqrt{1-t_{k}}Z)]}{\Ebb_{Z}[f(Y_{t_{k}}+\sqrt{1-t_{k}} Z)]\sqrt{1-t_k}}.
\end{align}
From the definition of $b(Y_{t_{k}},t_{k})$  in (\ref{driftb})  we may not get its explicit expression.
Here, we can get one estimator $\tilde{b}_{m}$ of $b$ by replacing $\Ebb_{Z}$ in $b$ with $m$-sample mean, i.e.,
\begin{align}\label{drifte1}
\tilde{b}_m(Y_{t_{k}},t_{k})=
\frac{\frac{1}{m}\sum_{j=1}^m[\nabla f(Y_{t_{k}}+\sqrt{1-t_{k}}Z_j)]}{\frac{1}{m}\sum_{j=1}^{m} [f(Y_{t_{k}}+\sqrt{1-t_{k}} Z_j)]}, \ k=0, \ldots, K-1,
\end{align}
or
\begin{align}\label{drifte2}
\tilde{b}_m(Y_{t_{k}},t_{k})=
\frac{\frac{1}{m}\sum_{j=1}^m[Z_j f(Y_{t_{k}}+\sqrt{1-t_{k}}Z_j)]}{\frac{1}{m}\sum_{j=1}^{m} [f(Y_{t_{k}}+\sqrt{1-t_{k}} Z_j)]\cdot\sqrt{1-t_k}}, \ k=0, \ldots, K-1,
\end{align}
where  $Z_1, \ldots, Z_{m}$ are i.i.d. $N(0, \bI_p)$.
The detailed description of SFS is summarized in following Algorithm \ref{alg:1} below, which is  Algorithm 2 in \cite{sfs21}.
%%%%%%%%%%%%
\begin{algorithm}[H]
	\caption{SFS for $\mu = \exp (-V(x))/C$  with  Monte Carlo estimation of the drift term}
    \label{alg:1}
	\begin{algorithmic}[1]
\STATE Input:  $m$, $K$.  Initialize $s=1/K$, $\widetilde{Y}_{t_0}=0$.
\FOR{$k= 0,1,\ldots, K-1$ }
\STATE Sample $\epsilon_{k+1}\sim N(0,\bI_{p})$.
\STATE Sample $Z_i, i,\ldots,m$, from $ N(0,\bI_{p})$.
\STATE Compute $\tilde{b}_{m}$ according to (\ref{drifte1}) or (\ref{drifte2}),
\STATE $
\widetilde{Y}_{t_{k+1}}=\widetilde{Y}_{t_{k}}+s\tilde{b}_{m}\left(\widetilde{Y}_{t_{k}}, t_{k}\right)+\sqrt{s} \epsilon_{k+1}$.
\ENDFOR
\STATE Output:  $\{\widetilde{Y}_{t_k}\}_{k=1}^{K}$.
\end{algorithmic}
\end{algorithm}

In Section 4.2 of \cite{sfs21}, they proved that
$$W_2(\mbox{Law}(\widetilde{Y}_{t_K}),\mu)\rightarrow 0, \ \ \mathrm{as} \ \  s \rightarrow 0, m \rightarrow \infty$$ under a restricted assumption that
the potential is uniformly strongly convex, i.e, $U(x,t)$ satisfies
(\ref{cond4}).
However, (\ref{cond4}) is not easy to verify.
In the next section, we establish a nonasymptotic bound on the  Wasserstein distance between  the  law of  $\widetilde{Y}_{t_K}$  generated by SFS (Algorithm \ref{alg:1}) and the target $\mu$ under smooth and bounded  conditions (\textbf{A1}) and (\textbf{A2}) but
 without using
%the strongly uniform concavity assumption  (\ref{cond4}) on  $b(x,t)$.
{the strongly uniform convexity assumption (\ref{cond4}) on  $U(x,t)$.
}
%%%%%%%%%%%%%%%%%%%%%%%%%%%%%%%%%%%%%%%%%%%%%%%%%%
%%%%%%%%%%%%%%%%%%%%%%%%%%%%%%%%%%%%%%%%%%%%%%%%%%
%\section{Bound on $W_2(\mbox{Law}(\widetilde{Y}_{t_K}),\mu)$ without convexity}\label{Theorey}
\section{Nonasymptotic Bound  without convexity}\label{Theorey}
Under conditions (\textbf{A1}) and (\textbf{A2}), one can easily deduce the growth condition and Lipschitz/H{\"o}lder continuity of the drift term $b(x,t)$ \cite{sfs21}, i.e.,
\begin{align}\label{cond1}
\|b(x,t)\|_2^2\leq C_0(1+\|x\|_2^2), \tag{C1}
\end{align}
and
\begin{align}\label{cond2}
\|b(x,t)-b(y,t)\|_2\leq C_1 \|x-y\|_2 \tag{C2},
\end{align}
and
\begin{align}\label{cond3}
\|b(x,t)-b(y,s)\|_2\leq  C_1 \left(\|x-y\|_2+|t-s|^{\frac{1}{2}}\right), \tag{C3}
\end{align}
where $C_0$ and $C_1$ are two finite and positive constants.
%%%%
\begin{remark}\label{R3}
(\ref{cond1}) and (\ref{cond2}) are the essentially sufficient
conditions such that  the
Schr{\"o}dinger-F\"{o}llmer SDE (\ref{sch-equation})
admits the unique strong solution.
 (\ref{cond3}) has  been introduced in  Theorem 4.1 of \cite{tzen2019theoretical}, and it is also similar to the condition H2 of \cite{chau2019stochastic} and Assumption 3.2 of \cite{barkhagen2018stochastic}.
 Obviously,  (\ref{cond3}) implies  (\ref{cond2}),  and (\ref{cond1}) holds if the drift term $b(x,t)$ is bounded over $\mathbb{R}^p\times[0,1]$. %Thence \eqref{cond3} is one more severe condition than \eqref{cond1} and \eqref{cond2}.
\end{remark}

  Let $\mathcal{D}(\nu_1, \nu_2)$ be the collection of coupling probability measures on $\left(\mathbb{R}^{2p},\mathcal{B}(\mathbb{R}^{2p})\right)$ such that its respective marginal distributions are $\nu_1$ and $\nu_2$.
The Wasserstein of order $d \geq 1$ with which we measure the discrepancy between $\mbox{Law}(\widetilde{Y}_{t_K})$  and $\mu$  is defined as
\begin{align*}
W_{d}(\nu_1, \nu_2)=\inf _{\nu \in \mathcal{D}(\nu_1, \nu_2)}\left(\int_{\mathbb{R}^{p}} \int_{\mathbb{R}^{p}}\left\|\theta_1-\theta_2\right\|_2^{d} \mathrm{d} \nu\left(\theta_1,\theta_2\right)\right)^{1/d}.
\end{align*}
%%%%%%%%%%%%%%%%%%%%%%%%
%\subsection{Error bounds for SFS in Algorithm \ref{alg:1}}
\begin{theorem}\label{th1}
Assume (\textbf{A1}) and  (\textbf{A2}) hold, then
\begin{align*}
W_2(\mbox{Law}(\widetilde{Y}_{t_K}),\mu)\leq \mathcal{O}(\sqrt{ps})+\mathcal{O}\left(\sqrt{\frac{p}{\log(m)}}\right),
\end{align*}
where $s=1/K$ is the step size.
\end{theorem}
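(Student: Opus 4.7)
The plan is to use the triangle inequality with an auxiliary EM iterate $Y_{t_k}$ driven by the \emph{exact} drift $b$ (the same scheme as (\ref{emd})):
\begin{align*}
W_2(\mbox{Law}(\widetilde Y_{t_K}),\mu) \le W_2(\mbox{Law}(Y_{t_K}),\mu) + W_2(\mbox{Law}(\widetilde Y_{t_K}),\mbox{Law}(Y_{t_K})),
\end{align*}
and bound the two summands by $\mathcal{O}(\sqrt{ps})$ and $\mathcal{O}(\sqrt{p/\log m})$ respectively. The first summand is pure EM discretization error and the second is the pure error from estimating the drift by $\tilde b_m$.

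For the first summand I would synchronously couple the SFS diffusion $\{X_t\}$ (with $X_1\sim\mu$ by Proposition~\ref{SBP}) with the continuous-time interpolation $\bar Y_t$ of $\{Y_{t_k}\}$, driven by the same Brownian motion. Since $\bar Y_{t_k}=Y_{t_k}$,
\[
X_t-\bar Y_t=\int_0^t\bigl[b(X_u,u)-b(Y_{t_{k(u)}},t_{k(u)})\bigr]\,du,
\]
where $t_{k(u)}$ is the largest grid point $\le u$. Splitting the integrand into $[b(X_u,u)-b(\bar Y_u,u)]+[b(\bar Y_u,u)-b(Y_{t_{k(u)}},u)]+[b(Y_{t_{k(u)}},u)-b(Y_{t_{k(u)}},t_{k(u)})]$, the first two pieces are controlled by the spatial Lipschitz bound (\ref{cond2}) and the last by the temporal H\"older bound (\ref{cond3}). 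Squaring, taking $L^2$ expectations, and combining with the one-step estimate $\mathbb{E}\|\bar Y_u-Y_{t_{k(u)}}\|_2^2\lesssim(1+p)s$---where the dimension $p$ enters through $\mathbb{E}\|B_u-B_{t_{k(u)}}\|_2^2=p(u-t_{k(u)})$---a Gronwall argument yields $\mathbb{E}\|X_1-\bar Y_1\|_2^2\lesssim ps$.

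For the second summand I would couple $Y_{t_k}$ and $\widetilde Y_{t_k}$ by sharing the innovations $\epsilon_{k+1}$. Setting $\Delta_k:=\widetilde Y_{t_k}-Y_{t_k}$,
\[
\Delta_{k+1}=\Delta_k+s[b(\widetilde Y_{t_k},t_k)-b(Y_{t_k},t_k)]+s[\tilde b_m(\widetilde Y_{t_k},t_k)-b(\widetilde Y_{t_k},t_k)].
\]
Squaring, applying (\ref{cond2}) to the first bracket and a uniform pointwise bound $\epsilon_m:=\sup_{x,k}\mathbb{E}\|\tilde b_m(x,t_k)-b(x,t_k)\|_2^2$ to the second, one obtains $\mathbb{E}\|\Delta_{k+1}\|_2^2\le(1+cs)\mathbb{E}\|\Delta_k\|_2^2+s\epsilon_m$, whose discrete Gronwall solution is $\mathbb{E}\|\Delta_K\|_2^2\lesssim\epsilon_m$.

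The hardest step is bounding $\epsilon_m\le\mathcal{O}(p/\log m)$. Writing $\tilde b_m-b=(N_m-N)/D_m - N(D_m-D)/(DD_m)$ with $N,D$ the true means and $N_m,D_m$ the MC means of $\nabla f(x+\sqrt{1-t}Z)$ and $f(x+\sqrt{1-t}Z)$, the lower bound (\textbf{A2}) gives $D,D_m\ge\xi$ deterministically and $\|\nabla f\|_2\le\gamma$ makes $N_m$ a bounded vector. The subtle point is that $f(x+\sqrt{1-t}Z_j)$ is only Lipschitz in $Z_j$, not bounded; I would control its fluctuations through Gaussian concentration for Lipschitz functions, truncating the MC samples at radius $O(\sqrt{\log m})$ and falling back on the deterministic bound $\|\tilde b_m-b\|_2\le 2\gamma/\xi$ off the truncation event, which balances the tail probability against the concentration radius so as to produce the announced $p/\log m$ rate.
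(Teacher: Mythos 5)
Your route differs structurally from the paper's. The paper writes a single one-step recursion directly comparing the SFS iterate $\widetilde Y_{t_k}$ with the diffusion $X_{t_k}$ (synchronously coupled to the same Brownian motion), with the Monte-Carlo drift error $\Ebb\|b(\widetilde Y_{t_{k-1}},t_{k-1})-\tilde b_m(\widetilde Y_{t_{k-1}},t_{k-1})\|_2^2$ sitting inside the Gronwall recursion; you instead insert an auxiliary exact-drift iterate $Y_{t_k}$ and split via the triangle inequality, which is a cleaner but equivalent accounting of the same two error sources. The real divergence is in how the drift-estimation term is controlled. You work with a \emph{uniform-in-$x$} bound $\epsilon_m=\sup_{x,k}\Ebb\|\tilde b_m(x,t_k)-b(x,t_k)\|_2^2$, which if available feeds into the $\Delta_k$ recursion with no further input. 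The paper cannot produce a uniform bound from its Lemma~4 (whose local bound degrades like $e^{R^2}$ on $\{\|x\|_2\le R\}$) and therefore bounds the expectation over the \emph{random} iterate instead: it splits on $\{\|\widetilde Y_{t_k}\|_2\le R\}$ versus its complement, uses the deterministic bound $\|b-\tilde b_m\|_2\le 2\gamma/\xi$ and Markov's inequality with the second-moment estimate $\Ebb\|\widetilde Y_{t_k}\|_2^2\lesssim p$ (Lemma~5), and optimizes $R\asymp\sqrt{\log m}$ to get $\mathcal{O}(p/\log m)$. Your uniform strategy would bypass Lemma~5 entirely if you can justify $\epsilon_m\le\mathcal{O}(p/\log m)$.

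That is precisely where your writeup mis-diagnoses the difficulty. You already have the right algebra: with $\tilde b_m-b=(N_m-N)/D_m - N(D_m-D)/(DD_m)$, the lower bound $D,D_m\ge\xi$ from (\textbf{A2}), and $\|N\|_2\le\gamma$ from $\|\nabla f\|_2\le\gamma$, the only inputs you need are $\Ebb\|N_m-N\|_2^2$ and $\Ebb|D_m-D|^2$ uniformly over $x$. Both follow immediately from (\textbf{A1}) because the \emph{variance} of a Lipschitz functional of independent Gaussians is controlled by the Lipschitz constant, not by any sup-norm: conditioning on an independent copy of the sample gives
\begin{align*}
\Ebb|D_m-D|^2\le\frac{1}{m}\Ebb_{Z_1,Z_1'}\bigl|f(x+\sqrt{1-t}Z_1)-f(x+\sqrt{1-t}Z_1')\bigr|^2\le\frac{2p\gamma^2}{m},
\end{align*}
and similarly for $N_m-N$, uniformly in $x$ and $t$. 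Plugging in gives $\epsilon_m\le\mathcal{O}(p/m)$ directly; the unboundedness of $f(x+\sqrt{1-t}Z_j)$ never enters because $D_m$ appears only in the denominator, where only the lower bound $\xi$ is used. The Gaussian-concentration truncation of the $Z_j$ and the fallback to $2\gamma/\xi$ that you describe are unnecessary and do not produce the $p/\log m$ rate through any mechanism you have made precise; you should simply drop that step. (Incidentally, the paper's own Lemma~4 has the same inefficiency: its inequality keeps a factor $|e|$ in the numerator instead of cancelling it against a denominator, which is what forces the $R$-ball split and the $\log m$. With your decomposition done properly, the rate in Theorem~\ref{th1} improves to $\mathcal{O}(\sqrt{p/m})$ already under (\textbf{A1})--(\textbf{A2}), making the extra boundedness assumption in Theorem~\ref{th2} unnecessary for that conclusion.)
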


\begin{remark}
This theorem provides some guidance on the selection of $s$ and $m$. To ensure convergence of the distribution of $\widetilde{Y}_{t_K}$, we should set the step size $s = o(1/p)$ and
$m=\exp(p/o(1))$.
In high-dimensional models with a large $p$, we need to generate a large number of random vectors from $N(0, \bI_p)$ to obtain an accurate estimate of the drift term $b$.
If  we assume that $f$ is bounded above % as in  \cite{tzen2019theoretical},
we can improve the nonasymptotic error bound, in which
$\mathcal{O}\left(\sqrt{p/\log(m)}\right)$
can be improved to be
$\mathcal{O}\left(\sqrt{p/m}\right)$.
\end{remark}
%%%%%%%%%%%%%%%%%%%%%
\begin{theorem}\label{th2}
Assume that, in addition to the conditions of Theorem \ref{th1}, $f$ has  a finite upper bound, then
\begin{align*}
W_2(\mbox{Law}(\widetilde{Y}_{t_K}),\mu)\leq \mathcal{O}(\sqrt{ps})+\mathcal{O}\left(\sqrt{\frac{p}{m}}\right),
\end{align*}
where $s=1/K$ is the step size.
\end{theorem}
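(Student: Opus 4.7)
The plan is to reuse the discretization-plus-Monte-Carlo decomposition that already drives Theorem \ref{th1} and to revisit only the Monte Carlo step, which is the single step affected by the added upper bound on $f$. Concretely, introduce the auxiliary iterate $Y_{t_k}$ defined by (\ref{emd}) with the exact drift $b$, and write
\[
W_2(\mbox{Law}(\widetilde{Y}_{t_K}),\mu) \le W_2(\mbox{Law}(\widetilde{Y}_{t_K}),\mbox{Law}(Y_{t_K})) + W_2(\mbox{Law}(Y_{t_K}),\mu).
\]
The second term is the pure Euler-Maruyama error; its $\mathcal{O}(\sqrt{ps})$ bound is established in the proof of Theorem \ref{th1} (conditions (\ref{cond1})--(\ref{cond3}) alone suffice) and is untouched by the new hypothesis. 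All the work therefore concentrates on sharpening the first term from $\sqrt{p/\log m}$ to $\sqrt{p/m}$.

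For that first term I would synchronously couple $\widetilde{Y}_{t_k}$ and $Y_{t_k}$ via the shared innovations $\epsilon_{k+1}$ and analyse $\Delta_k=\widetilde{Y}_{t_k}-Y_{t_k}$. Writing
\[
\Delta_{k+1}=\Delta_k+s\bigl(\tilde b_m(\widetilde Y_{t_k},t_k)-b(\widetilde Y_{t_k},t_k)\bigr)+s\bigl(b(\widetilde Y_{t_k},t_k)-b(Y_{t_k},t_k)\bigr),
\]
and applying (\ref{cond2}) to the last summand yields the one-step estimate $\|\Delta_{k+1}\|_2\le(1+C_1 s)\|\Delta_k\|_2+sE_k$, where $E_k=\|\tilde b_m(\widetilde Y_{t_k},t_k)-b(\widetilde Y_{t_k},t_k)\|_2$ denotes the per-step MC error. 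Discrete Gronwall together with $Ks=1$ and Cauchy--Schwarz then gives
\[
\bigl(\Ebb\|\Delta_K\|_2^2\bigr)^{1/2}\le e^{C_1}\max_{0\le k<K}\bigl(\Ebb E_k^2\bigr)^{1/2}.
\]

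The decisive step is bounding $\Ebb E_k^2$. Conditioning on $\widetilde Y_{t_k}=x$, the drift is a ratio of two Gaussian expectations, so the identity
\[
\tilde b_m(x,t_k)-b(x,t_k)=\frac{(A_m-A)B-A(B_m-B)}{B_m B}
\]
reduces the task to controlling $\Ebb\|A_m-A\|_2^2$ and $\Ebb|B_m-B|^2$, using $B,B_m\ge\xi$ from assumption (\textbf{A2}). This is exactly where the new upper bound $f\le M$ pays off: the summands of $B_m$ now lie in the bounded interval $[\xi,M]$, so a direct variance computation gives $\Ebb|B_m-B|^2\le C/m$, and the numerator summands are likewise bounded (either by $\gamma$ in formulation (\ref{drifte1}), or by $M\|Z_j\|_2$ in formulation (\ref{drifte2}), the latter producing the $p$ factor via $\Ebb\|Z_j\|_2^2=p$). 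Combining these bounds yields $\Ebb E_k^2\le Cp/m$ uniformly in $k$, and hence the desired $\mathcal{O}(\sqrt{p/m})$ contribution after the Gronwall step. By contrast, in Theorem \ref{th1} the summand of $B_m$ is unbounded because $f$ is only Lipschitz, forcing a truncation at some radius $R$ followed by optimization of Gaussian tails, which ultimately limits the rate to $1/\sqrt{\log m}$.

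The main obstacle I anticipate is the singular weight $1/\sqrt{1-t_k}$ that appears at the last time step under formulation (\ref{drifte2}); the cleanest remedy is to use formulation (\ref{drifte1}) at that step, where $\nabla f$ is bounded by $\gamma$ and no singular weight is present, so that the uniform-in-$k$ estimate $\Ebb E_k^2\le Cp/m$ is preserved. Aside from this bookkeeping, the proof is a careful specialization of the Theorem \ref{th1} argument that replaces the truncation-based MC inequality by a direct Hoeffding-type one on bounded summands, which is precisely what converts $1/\sqrt{\log m}$ into $1/\sqrt{m}$ while leaving the Euler-Maruyama $\sqrt{ps}$ term intact.
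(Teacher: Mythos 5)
Your proof is correct, and it reaches the theorem by the same key observation as the paper, but via a somewhat different decomposition. The paper does not introduce an intermediate exact-drift Euler iterate $Y_{t_k}$; instead, exactly as in Theorem~\ref{th1}, it synchronously couples $\widetilde Y_{t_k}$ directly with the continuous diffusion $X_{t_k}$ and re-runs the single squared-norm recursion (\ref{th1eq1}), with the only change being that Lemma~\ref{lemma6} now supplies $\mathcal{O}(p/m)$ in place of $\mathcal{O}(p/\log m)$ for the per-step Monte Carlo error. You split $W_2(\mathrm{Law}(\widetilde Y_{t_K}),\mu)\le W_2(\mathrm{Law}(\widetilde Y_{t_K}),\mathrm{Law}(Y_{t_K}))+W_2(\mathrm{Law}(Y_{t_K}),\mu)$, attribute the second term to pure Euler--Maruyama discretization and handle the first by a pathwise one-step norm recursion plus Minkowski; this is a valid and arguably more modular route, at the cost of an extra triangle-inequality layer. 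The decisive step is identical in both arguments: the finite upper bound on $f$ makes the denominator term $e=Q_{1-t}f(x)$ bounded uniformly over $x\in\mathbb R^p$, which is precisely what eliminates the ball-truncation at radius $R$ and the Gaussian-tail optimization in Lemma~\ref{lemma4}, turning the rate from $1/\log m$ into $1/m$. Your suggestion to work with formulation (\ref{drifte1}) to avoid the $1/\sqrt{1-t_k}$ weight also matches the paper's Lemma~\ref{lemma4}, which is stated for that formulation; note, though, that at $k=K-1$ one has $\sqrt{1-t_{K-1}}=\sqrt s>0$, so the weight in (\ref{drifte2}) is large but not actually singular. Two minor remarks: your direct variance bound on the bounded summands of $e_m$ yields $\mathcal O(1/m)$ without a $p$ factor, which is in fact slightly sharper than the paper's Lipschitz-based $\mathcal O(p/m)$ from (\ref{mm2}), though this does not change the stated rate; and a fully rigorous write-up would still need Lemma~\ref{lemma5}'s moment bound if you wanted to justify any step requiring integrability of $\widetilde Y_{t_k}$, though with the bounded drift from (\ref{eq2}) this is automatic.
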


\begin{remark}
With the boundedness condition on $f$, to ensure convergence of the sampling distribution, we can set the step size $s = o(1/p)$ and
$m=p/o(1)$.
Note that the sample size requirement for approximating the drift term is significantly less
stringent than that in Theorem \ref{th1}.
\end{remark}
\begin{remark}
Langevin sampling method has been studied under  the (strongly) convex potential assumption
%\cite{durmus2016sampling,dalalyan2017further,
%dalalyan2017theoretical,cheng2018convergence,dalalyan2019user-friendly,durmus2016high-dimensional}
\cite{durmus2016high-dimensional,durmus2016sampling,durmus2017nonasymptotic,dalalyan2017further,
dalalyan2017theoretical,cheng2018convergence,dalalyan2019user-friendly};
the dissipativity condition for the drift term \cite{raginsky2017non,mou2019improved,zhang2019nonasymptotic};
  the local convexity condition for the potential function outside a ball \cite{durmus2017nonasymptotic,cheng2018sharp,ma2019sampling,bou2020coupling}.
However, these conditions may not hold for models with multiple modes, for example,   Gaussian mixtures, where their potentials are not convex and the log Sobolev inequality may  not be satisfied.
Moreover, the constant in the log Sobolev inequality depends on the dimensionality exponentially \cite{wang2009log,hale2010asymptotic,menz2014poincare,raginsky2017non}, implying that the  Langevin samplers suffers from the curse of dimensionality.
%The pity is that the above mentions analysis of Langevin sampler may not hold for Gaussian mixtures, %since whose potentials are not convex, and the log-Sobolev inequality may not hold for Gaussian %mixtures.
 SFS does not require the underlying Markov process to be ergodic, therefore, our results in Theorem \ref{th1} and \ref{th2}
 are established under  the smooth and bounded  assumptions (\textbf{A1}) and (\textbf{A2}) on $f$
  but do not  need the above mentioned  conditions used in the analysis of Langevin samplers.
\end{remark}

%%%%%%%%%%%%%%
%\subsection{Regularization to improve  the lower bound on $f$}
In Theorem \ref{th1} and Theorem \ref{th2}, we use  (\textbf{A2}), i.e,  $f$ has positive lower bound,
however, (\textbf{A2}) may  not hold if the target distribution admits compact support. To circumvent this difficulty,  we consider the regularized probability  measure $$\mu_{\varepsilon}=(1-\varepsilon)\mu+\varepsilon G, \ \ \varepsilon\in(0,1).$$ The corresponding density ratio is
$$f_{\varepsilon} = \frac{d \mu_{\varepsilon}}{d G}=(1-\varepsilon)f+\varepsilon.$$ Obviously,
%$$
%(1-\varepsilon)\|\nabla f\|_2=\|\nabla f_{\varepsilon}\|_2,~(1-\varepsilon)\|Hess(f)\|_2=\|Hess(f_{\varepsilon})\|_2,~f_{\varepsilon}\geq\varepsilon.
%$$
%Therefore,
$f_{\varepsilon}$ satisfies (\textbf{A1}) and (\textbf{A2}) if   $f$ and $\nabla f$ are Lipschitz continuous.   Since $\mu_{\varepsilon}$ can approximate to $\mu$ well if we set $\varepsilon$ small enough, then
we consider sampling from $\mu_{\varepsilon}$ by running  SFS (Algorithm \ref{alg:1}) with $f$ being replaced by  $f_{\varepsilon}$.
We use $\widetilde{Y}_{t_K}^{\varepsilon}$ to denote  the last iteration of SFS.
%%%%%%%%%%%
\begin{theorem}\label{th3}
Assume (\textbf{A1}) holds and set $\varepsilon=(\log(m))^{-1/5}$, then
\begin{align*}
W_2(\mbox{Law}(\widetilde{Y}_{t_K}^{\varepsilon}),\mu)\leq \mathcal{O}(\sqrt{ps})+\widetilde{C}_p\cdot\mathcal{O}\left(\frac{1}{(\log(m))^{1/10}}\right),
\end{align*}
where  $s=1/K$ is the step size, $\widetilde{C}_p$ is a constant depending on $p$.
Moreover, if $f$ has the finite upper bound and set $\varepsilon=m^{-1/5}$, then
\begin{align*}
W_2(\mbox{Law}(\widetilde{Y}_{t_K}^{\varepsilon}),\mu)\leq \mathcal{O}(\sqrt{ps})+\widetilde{C}_p\cdot\mathcal{O}\left(\frac{1}{m^{1/10}}\right).
\end{align*}
\end{theorem}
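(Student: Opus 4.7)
The plan is to pass through the regularized target $\mu_\varepsilon$, for which Theorem \ref{th1} (respectively Theorem \ref{th2}) is directly applicable, and then to optimize $\varepsilon$. Note that $f_\varepsilon=(1-\varepsilon)f+\varepsilon$ satisfies \textbf{(A1)} with Lipschitz constant at most $\gamma$ and \textbf{(A2)} with explicit lower bound $\varepsilon$, so running SFS with $f$ replaced by $f_\varepsilon$ is exactly the setting of Theorems \ref{th1}--\ref{th2} applied to the target $\mu_\varepsilon$. A triangle inequality splits the error:
\[
W_2(\mathrm{Law}(\widetilde{Y}_{t_K}^{\varepsilon}),\mu)\leq W_2(\mathrm{Law}(\widetilde{Y}_{t_K}^{\varepsilon}),\mu_\varepsilon)+W_2(\mu_\varepsilon,\mu),
\]
and I would bound the two pieces separately.

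For $W_2(\mu_\varepsilon,\mu)$ I would use the natural mixture coupling: draw $B\sim\mathrm{Bernoulli}(\varepsilon)$, let $X\sim\mu$, set $X'=X$ when $B=0$ and draw $X'\sim G$ independently when $B=1$. Then $(X,X')$ has marginals $\mu$ and $\mu_\varepsilon$, hence
\[
W_2^{2}(\mu_\varepsilon,\mu)\leq \Ebb\|X-X'\|_2^2\leq 2\varepsilon\bigl(\Ebb_\mu\|X\|_2^2+p\bigr),
\]
where $\Ebb_\mu\|X\|_2^2<\infty$ because Lipschitzness of $f$ yields $f(x)\leq f(0)+\gamma\|x\|_2$ while $G$ has all moments. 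This produces $W_2(\mu_\varepsilon,\mu)\leq \widetilde{C}_p\sqrt{\varepsilon}$ for a constant depending only on $p$.

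For the first term I would invoke Theorem \ref{th1} (respectively Theorem \ref{th2}) applied to $\mu_\varepsilon$, but track explicitly how the hidden constants depend on the lower bound $\xi_\varepsilon=\varepsilon$. As already observed in the remark following Proposition \ref{SBP}, $\|b_\varepsilon\|_2$ and $\|\nabla b_\varepsilon\|_2$ are bounded by $\gamma/\varepsilon$ and $\gamma/\varepsilon+\gamma^2/\varepsilon^2$ respectively; propagating these through the proofs multiplies the $m$-dependent rate by an $\varepsilon^{-2}$ factor, yielding
\[
W_2(\mathrm{Law}(\widetilde{Y}_{t_K}^{\varepsilon}),\mu_\varepsilon)\leq \mathcal{O}(\sqrt{ps})+\varepsilon^{-2}\cdot\mathcal{O}\bigl(\sqrt{p/\log(m)}\bigr),
\]
and the analogue with $m$ replacing $\log(m)$ under the extra boundedness of $f$. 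Combining with the coupling bound and balancing $\sqrt{\varepsilon}$ against $\varepsilon^{-2}(\log m)^{-1/2}$ forces $\varepsilon^{5/2}=(\log m)^{-1/2}$, i.e., $\varepsilon=(\log m)^{-1/5}$ and rate $(\log m)^{-1/10}$; the bounded-$f$ case balances against $\varepsilon^{-2}m^{-1/2}$ and forces $\varepsilon=m^{-1/5}$ with rate $m^{-1/10}$, reproducing both assertions.

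The main obstacle is the careful bookkeeping of the $\varepsilon$-dependence inside the proofs of Theorems \ref{th1}--\ref{th2}: the sup-norms of $b_\varepsilon$ and $\nabla b_\varepsilon$ both blow up polynomially in $1/\varepsilon$, and it is precisely the $\varepsilon^{-2}$ blow-up, weighed against $W_2(\mu_\varepsilon,\mu)\asymp\sqrt{\varepsilon}$, that dictates the specific exponents $(\log m)^{-1/5}$ and $m^{-1/5}$ appearing in the statement. Once that scaling is verified, the coupling estimate and the final optimization in $\varepsilon$ are elementary.
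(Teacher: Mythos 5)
Your proposal is correct and follows essentially the same approach as the paper: triangle inequality through the regularized target $\mu_\varepsilon$, a Bernoulli/mixture coupling giving $W_2(\mu,\mu_\varepsilon)=\mathcal{O}(\sqrt{p\varepsilon})$, a re-run of the drift-estimation analysis tracking the $1/\varepsilon$ blow-up from the lower bound $\xi_\varepsilon=\varepsilon$ (the paper carries this out explicitly in its $\varepsilon$-dependent versions of Lemmas on $\sup\Ebb\|b-\tilde b_m\|_2^2$ and $\Ebb\|\widetilde Y^\varepsilon_{t_k}\|_2^2$, yielding an $\varepsilon^{-2}$ prefactor on the $m$-rate), and the final optimization in $\varepsilon$ giving the exponents $-1/5$ and $-1/10$. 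The only cosmetic difference is that you invoke Theorems~\ref{th1}--\ref{th2} as a template and propagate constants, whereas the paper rederives the $\varepsilon$-explicit lemmas directly; both routes land on the same bound.
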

%%%
%{\color{red}
%We note that the strong convexity assumption on the potential is not needed for us to establish the non-asymptotic error bounds between $\widetilde{Y}_{t_K}$ and the target distribution $\mu$  in Theorems \ref{th1}-\ref{th3}, which is weaker than \cite{SFS}.
%}
%%%%%%%%%%%%%%%%%%%%%%%%%%%%%%%%%%%%%%%%%%%%%%%%%%
%%%%%%%%%%%%%%%%%%%%%%%%%%%%%%%%%%%%%%%%%%%%%%%%%%
\section{Conclusion}\label{conlusion}

In \cite{sfs21}, Schr\"{o}dinger-F\"{o}llmer sampler (SFS) was proposed  for sampling from possibly unnormalized distributions. The key feature of SFS is that it does not need ergodicity as its theoretical basis.
The consistency of SFS proved in \cite{sfs21} relies a restricted assumption that  %the drift term $b(x,t)$
the potential function is uniformly  strongly convex.
In this paper we provide a new convergence  analysis of the SFS  without the  strongly  convexity condition on the potential.
 We establish a nonasymptotic error bound on Wasserstein distance between the law of the output of SFS and the target distribution
under   smooth and bounded assumptions on the   density ratio of the target distribution  over the standard normal distribution.

\section{Acknowledgment}
The authors would like to thank Professor Liming Wu at Universit\'{e} Clermont-Auvergne
for helpful discussions on this topic.

%%%%%%%%%%%%%%%%%%%%%%%%%%%%%%%%%%%%%%%%
%\appendix
%\appendixpage
%\addappheadtotoc
\begin{appendix}
%%%\section{Appendix: Proofs}
\section{Appendix} \label{append}
\setcounter{equation}{0}
\def\theequation{A.\arabic{equation}}
In this appendix, we prove
%Remark \ref{R2},
Proposition  \ref{SBP} and Theorems \ref{th1}-\ref{th3}.

%%%%%%%%%%%%%%%%%%%%%%%%%%%%%%%%%%%%%%%%%%%%%%%%%%
\subsection{Proof of Proposition  \ref{SBP}}
\begin{proof}
This is a known result, see for example \cite{dai1991stochastic,lehec2013representation}.
%\cite{tzen2019theoretical}.
%As is known that the transition probability density  of  a standard $p$-dimensional Brownian motion is defined as
%\begin{align*}
%\widetilde{p}_{s, t}(x, y)=\frac{1}{(2 \pi(t-s))^{p / 2}} \exp \left(-\frac{1}{2(t-s)}\|x-y\|^{2}_2\right).
%\end{align*}
%Then, the diffuse process $\{X_t\}_{t\in [0,1]}$ defined in \eqref{SBP} admits the transition probability density
%\begin{align*}
%p_{s, t}(x, y)=\widetilde{p}_{s, t}(x, y) \frac{Q_{1-t} f(y)}{Q_{1-s} f(x)},
%\end{align*}
%which is analysed in \cite{tzen2019theoretical}.
%Then, for any set $A \in \mathcal{B}(\mathbb{R}^p)$,
%\begin{align*}
%P(X_1\in A)&=\int_A p_{0,1}(0,y) \mathrm{d} y\\
%&=\int_A \widetilde{p}_{0,1}(0,y)\frac{Q_{0} f(y)}{Q_{1} f(0)}\mathrm{d} y\\
%&=\mu(A).
%\end{align*}
%Therefore, $X_1$ follows from the probability distribution $\mu$.
\end{proof}
%%%%%%%%%%%%%%%%%%%%%%%%%%%
%%%%%%%%%%%%%%%%
\subsection{Preliminary lemmas for Theorems \ref{th1}-\ref{th2}}
%%%
%%\setcounter{equation}{0}
%%\def\theequation{A.\arabic{equation}}
%%%
First, we introduce  Lemmas \ref{lemma2}-\ref{lemma6} in preparing for the
proofs of Theorems \ref{th1}-\ref{th2}.
%%%%%%%%%%%%%%%%%%%%%%%%%%
\begin{lemma}\label{lemma2}
%Assume \eqref{cond1} holds,
Assume (\textbf{A1}) and (\textbf{A2}) hold,
then
\begin{align*}
\Ebb[\|X_t\|_2^2]\leq 2(C_0+p)\exp(2C_0t).
\end{align*}
\end{lemma}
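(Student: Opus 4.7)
The plan is a standard second-moment estimate for an SDE with linearly growing drift, using Itô's formula followed by Grönwall. Specifically, writing $u(t):=\mathbb{E}\|X_t\|_2^2$, I would apply Itô's formula to the smooth function $\varphi(x)=\|x\|_2^2$ evaluated along the strong solution of \eqref{sch-equation}. Since the Hessian of $\varphi$ is $2\bI_p$ and the diffusion coefficient is the identity, this yields
\begin{equation*}
\|X_t\|_2^2 = 2\int_0^t X_s^\top b(X_s,s)\,\mathrm{d}s + p\,t + 2\int_0^t X_s^\top\,\mathrm{d}B_s,
\end{equation*}
with $X_0=0$. The stochastic integral is a true martingale with zero expectation (justified below), so taking expectations gives the scalar integral identity $u(t)=2\int_0^t\mathbb{E}[X_s^\top b(X_s,s)]\,\mathrm{d}s + pt$.

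Next I would bound the cross term using Cauchy--Schwarz and the AM--GM inequality $2ab\le a^2+b^2$, combined with the linear growth condition \eqref{cond1}:
\begin{equation*}
2X_s^\top b(X_s,s)\;\le\;\|X_s\|_2^2 + \|b(X_s,s)\|_2^2 \;\le\; \|X_s\|_2^2 + C_0\bigl(1+\|X_s\|_2^2\bigr) = (1+C_0)\|X_s\|_2^2 + C_0.
\end{equation*}
Taking expectations and differentiating the integral identity gives the differential inequality $u'(t)\le (1+C_0)u(t) + (C_0+p)$. Grönwall's lemma then produces $u(t)\le \tfrac{C_0+p}{1+C_0}\bigl(e^{(1+C_0)t}-1\bigr)\le (C_0+p)e^{(1+C_0)t}$, which I would coarsen to the stated bound $2(C_0+p)\exp(2C_0 t)$ (valid since $1+C_0\le 2C_0$ whenever $C_0\ge 1$; without loss of generality one may replace $C_0$ by $\max(C_0,1)$).

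The one non-routine point to check is that the stochastic integral $\int_0^t X_s^\top\,\mathrm{d}B_s$ has zero expectation. Because of the linear growth in \eqref{cond1} one cannot a priori assume $\mathbb{E}\int_0^t\|X_s\|_2^2\,\mathrm{d}s<\infty$ before the estimate is established; the standard workaround is a localization argument. I would introduce stopping times $\tau_n=\inf\{t\ge 0:\|X_t\|_2\ge n\}$, carry out the Itô calculation up to $t\wedge\tau_n$ so that the stopped stochastic integral is a bona fide martingale, obtain the inequality for $u_n(t):=\mathbb{E}\|X_{t\wedge\tau_n}\|_2^2$, apply Grönwall to get a bound independent of $n$, and then let $n\to\infty$ via Fatou's lemma. (Alternatively, one may invoke the boundedness $\sup_{x,t}\|b(x,t)\|_2\le\gamma/\xi$ noted in Remark~(iii), which immediately gives uniform integrability; this is slicker but requires appealing to that remark rather than only to \eqref{cond1}.) The localization step is the only real technical wrinkle; the rest is a one-line Grönwall computation.
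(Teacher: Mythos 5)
Your proof is correct, but it takes a genuinely different route from the paper's. The paper does not invoke It\^o's formula at all: it starts from the integral representation $X_t=\int_0^t b(X_u,u)\,\mathrm{d}u+B_t$, applies the triangle inequality and $(a+b)^2\le 2a^2+2b^2$ to get $\|X_t\|_2^2\le 2\bigl(\int_0^t\|b(X_u,u)\|_2\,\mathrm{d}u\bigr)^2+2\|B_t\|_2^2$, uses Cauchy--Schwarz on the time integral and (C1), takes expectations, and applies Gr\"onwall to obtain $\Ebb\|X_t\|_2^2\le 2(C_0+p)e^{2C_0t}$ directly. Because the Brownian contribution enters only through $\Ebb\|B_t\|_2^2=pt$, the paper never has to argue that a stochastic integral is a martingale, so the localization step that you (correctly) flag as the one technical wrinkle in the It\^o route simply does not arise. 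The trade-offs: your It\^o-based argument is the more standard textbook route for second-moment SDE estimates and isolates the cross term $X_s^\top b(X_s,s)$ cleanly, but it yields exponent $1+C_0$ and you need the cosmetic assumption $C_0\ge 1$ (or a replacement of $C_0$ by $\max(C_0,1)$) to match the stated constant; the paper's elementary manipulation reaches exactly $2C_0$ with no side condition and no stopping times, at the cost of a slightly cruder use of Cauchy--Schwarz. Both produce a valid Gr\"onwall inequality and the same order of bound.

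One small thing worth noting: you are right that, strictly speaking, even the paper's Gr\"onwall step presupposes that $u(t)=\Ebb\|X_t\|_2^2$ is finite and locally integrable. In this problem that is immediate because Remark~(iii) shows the drift is uniformly bounded by $\gamma/\xi$ under (\textbf{A1})--(\textbf{A2}), so both your localization and the paper's implicit assumption are easily justified; you correctly identify this shortcut in your parenthetical.
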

%%%%
\begin{proof}
From the definition of $X_t$ in (\ref{sch-equation}), we have
$
\|X_t\|_2\leq \int_{0}^t\|b(X_u,u)\|_2\mathrm{d}u+\|B_t\|_2.
$
Then, we can get
\begin{align*}
\|X_t\|_2^2&\leq
2\left(\int_{0}^t\|b(X_u,u)\|_2\mathrm{d}u\right)^2+2\|B_t\|_2^2\\
&\leq
2t\int_{0}^t\|b(X_u,u)\|_2^2\mathrm{d}u+2\|B_t\|_2^2\\
&\leq
2t\int_{0}^tC_0[\|X_u\|_2^2+1]\mathrm{d}u+2\|B_t\|_2^2,
\end{align*}
where the first inequality holds by the inequality $(a+b)^2\leq 2a^2+2b^2$, the last inequality holds by (\ref{cond1}).
Thus,
\begin{align*}
\Ebb\|X_t\|_2^2&\leq
2t\int_{0}^tC_0(\Ebb\|X_u\|_2^2+1)\mathrm{d}u+2\Ebb\|B_t\|_2^2\\
&\leq
2C_0\int_{0}^t \Ebb\|X_u\|_2^2\mathrm{d}u+2(C_0+p).
\end{align*}
By Bellman-Gronwall inequality, we have
\begin{align*}
\Ebb\|X_t\|_2^2\leq 2(C_0+p)\exp(2C_0t).
\end{align*}
%%%
\end{proof}
%%%%%%%%%%%%%%%%%%%%%%%%%
\begin{lemma}\label{lemma3}
%Assume \eqref{cond1} holds,
Assume (\textbf{A1}) and (\textbf{A2}) hold,
then for any $0\leq t_1\leq t_2\leq 1$,
\begin{align*}
\Ebb[\|X_{t_2}-X_{t_1}\|_2^2]\leq 4C_0\exp(2C_0)(C_0+p)(t_2-t_1)^2+2C_0(t_2-t_1)^2+2p(t_2-t_1).
\end{align*}
\end{lemma}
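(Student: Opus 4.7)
The plan is to mimic the structure of the proof of Lemma \ref{lemma2}: rewrite the increment $X_{t_2}-X_{t_1}$ using the integral form of the SDE (\ref{sch-equation}), split it into a drift integral and a Brownian increment, and then control the drift integral via the growth condition (\ref{cond1}) together with the uniform second-moment bound from Lemma \ref{lemma2}.

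Concretely, I would start from
$$X_{t_2}-X_{t_1}=\int_{t_1}^{t_2}b(X_u,u)\,\mathrm{d}u+(B_{t_2}-B_{t_1}),$$
apply $\|a+b\|_2^2\leq 2\|a\|_2^2+2\|b\|_2^2$, and take expectations. The Brownian contribution $2\Ebb\|B_{t_2}-B_{t_1}\|_2^2=2p(t_2-t_1)$ gives the last summand immediately. For the drift term I would use Cauchy-Schwarz (equivalently, Jensen) in the form
$$\left\|\int_{t_1}^{t_2}b(X_u,u)\,\mathrm{d}u\right\|_2^2\leq (t_2-t_1)\int_{t_1}^{t_2}\|b(X_u,u)\|_2^2\,\mathrm{d}u,$$
and then insert the growth bound (\ref{cond1}) to get $\|b(X_u,u)\|_2^2\leq C_0(1+\|X_u\|_2^2)$.

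Taking expectations and invoking Lemma \ref{lemma2} with $u\in[t_1,t_2]\subseteq[0,1]$ yields the uniform estimate $\Ebb\|X_u\|_2^2\leq 2(C_0+p)\exp(2C_0)$. Substituting back gives
$$2\,\Ebb\left\|\int_{t_1}^{t_2}b(X_u,u)\,\mathrm{d}u\right\|_2^2\leq 2C_0(t_2-t_1)\int_{t_1}^{t_2}\bigl(1+2(C_0+p)\exp(2C_0)\bigr)\,\mathrm{d}u,$$
and evaluating the integral produces precisely $2C_0(t_2-t_1)^2+4C_0\exp(2C_0)(C_0+p)(t_2-t_1)^2$, which matches the first two summands in the stated bound.

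I do not anticipate any real obstacle; the only minor point requiring attention is replacing the factor $\exp(2C_0 u)$ from Lemma \ref{lemma2} by its uniform value $\exp(2C_0)$ on $[0,1]$, which is what produces the clean constant in the final bound.
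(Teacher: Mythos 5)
Your proposal is correct and follows essentially the same route as the paper's proof: decompose $X_{t_2}-X_{t_1}$ into the drift integral plus the Brownian increment, apply $(a+b)^2\le 2a^2+2b^2$ and Cauchy--Schwarz, insert the growth bound (\ref{cond1}), and then use the uniform second-moment estimate from Lemma \ref{lemma2} before integrating. No gaps.
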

%%%%
\begin{proof}
From the definition of $X_t$ in (\ref{sch-equation}), we have
\begin{align*}
\|X_{t_2}-X_{t_1}\|_2\leq \int_{t_1}^{t_2}\|b(X_u,u)\|_2\mathrm{d}u+\|B_{t_2}-B_{t_1}\|_2.
\end{align*}
Then, we can get
\begin{align*}
\|X_{t_2}-X_{t_1}\|_2^2&\leq
2\left(\int_{t_1}^{t_2}\|b(X_u,u)\|_2\mathrm{d}u\right)^2+2\|B_{t_2}-B_{t_1}\|_2^2\\
&\leq
2(t_2-t_1)\int_{t_1}^{t_2}\|b(X_u,u)\|_2^2\mathrm{d}u+2\|B_{t_2}-B_{t_1}\|_2^2\\
&\leq
2(t_2-t_1)\int_{t_1}^{t_2}C_0[\|X_u\|_2^2+1]\mathrm{d}u+2\|B_{t_2}-B_{t_1}\|_2^2,
\end{align*}
where the last inequality holds by (\ref{cond1}).
Hence,
\begin{align*}
\Ebb\|X_{t_2}-X_{t_1}\|_2^2&\leq
2(t_2-t_1)\int_{t_1}^{t_2}C_0(\Ebb\|X_u\|_2^2+1)\mathrm{d}u+2\Ebb\|B_{t_2}-B_{t_1}\|_2^2\\
&\leq
4C_0\exp(2C_0)(C_0+p)(t_2-t_1)^2+2C_0(t_2-t_1)^2+2p(t_2-t_1),
\end{align*}
where the last inequality holds by Lemma \ref{lemma2}.
\end{proof}
%%%%%%%%%%%%%%%%%%%%%%
\begin{lemma}\label{lemma4}
%If $f$  and $\nabla f$ are Lipschitz continuous, and $f$  has the lower bound greater than 0,
Assume (\textbf{A1}) and (\textbf{A2}) hold,
then for any $R>0$,
\begin{align*}
\underset{\|x\|_2
\leq R,t\in [0,1]}{\sup} \Ebb\left[\|b(x,t)-\tilde{b}_m(x,t)\|_2^2\right]
\leq \mathcal{O}\left(\frac{p\exp(R^2)}{m}\right).
\end{align*}
Moreover, if $f$ has the finite upper bound, then
\begin{align*}
\underset{x\in\mathbb{R}^p,t\in [0,1]}{\sup} \mathbb{E}\left[\|b(x,t)-\tilde{b}_m(t,x)\|_2^2\right]
\leq \mathcal{O}\left(\frac{p}{m}\right).
\end{align*}
\end{lemma}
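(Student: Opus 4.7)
The plan is to decompose the difference $b-\tilde b_m$ as a ratio error and then control its numerator and denominator pieces by standard i.i.d.\ variance estimates. Setting $A(x,t)=\Ebb_Z[\nabla f(x+\sqrt{1-t}Z)]$ and $B(x,t)=\Ebb_Z[f(x+\sqrt{1-t}Z)]$, with $\hat A_m,\hat B_m$ the corresponding Monte Carlo averages from (\ref{drifte1}), the algebraic identity
$$
b(x,t)-\tilde b_m(x,t) \;=\; \frac{A-\hat A_m}{\hat B_m} \;+\; \frac{A\,(\hat B_m-B)}{B\,\hat B_m}
$$
combined with (\textbf{A2}) (which yields both $B\geq\xi$ and, crucially, the deterministic lower bound $\hat B_m\geq\xi$, since every summand $f(x+\sqrt{1-t}Z_j)\geq\xi$) and with $\|A/B\|_2=\|b\|_2\leq\gamma/\xi$ from Remark iii, gives the pointwise estimate
$$
\|b-\tilde b_m\|_2 \;\leq\; \xi^{-1}\|\hat A_m-A\|_2 \;+\; \gamma\,\xi^{-2}\,|\hat B_m-B|.
$$

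After squaring and taking expectation, the lemma reduces to bounding two Monte Carlo variances. I would control the vector numerator using $\|\nabla f\|_2\leq\gamma$ from Remark iii, giving $\Ebb\|\hat A_m-A\|_2^2\leq\gamma^2/m$. For the scalar denominator,
$$
\Ebb|\hat B_m-B|^2 \;=\; \tfrac{1}{m}\mathrm{Var}\!\left(f(x+\sqrt{1-t}Z)\right),
$$
which I would bound using the Lipschitz growth $|f(y)|\leq f(0)+\gamma\|y\|$ together with the Gaussian moments $\Ebb\|Z\|^k<\infty$. For $\|x\|_2\leq R$ this yields a polynomial-in-$R,p$ second moment which, after an exponential overbound, becomes the stated $\mathcal{O}(p\exp(R^2)/m)$; this deliberately loose envelope is convenient because in the proof of Theorem \ref{th1} the lemma will be paired with a high-probability truncation of $X_{t_k}$ to a ball of radius $R\asymp\sqrt{\log m}$, so that $\exp(R^2)$ becomes a controlled power of $m$.

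For the sharper second bound, the extra hypothesis $f\leq F_{\max}$ reduces $\mathrm{Var}(f(x+\sqrt{1-t}Z))$ to at most $F_{\max}^2$ uniformly in $x,t$, eliminating the $R$ dependence and collapsing the estimate to $\mathcal O(p/m)$ on all of $\mathbb{R}^p\times[0,1]$ (the factor of $p$ entering through a coordinatewise accounting in the gradient sum). The main technical obstacle I anticipate is the unbounded case: controlling $\mathrm{Var}(f)$ under only Lipschitz growth requires paying for the linear growth of $f$ through Gaussian moments in a way that remains uniform in $t\in[0,1]$, and reconciling the natural polynomial-in-$(R,p)$ bound with the cleaner exponential form quoted in the lemma is the step that demands the most care.
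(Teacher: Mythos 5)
Your decomposition
\[
b-\tilde b_m=\frac{A-\hat A_m}{\hat B_m}+\frac{A\,(\hat B_m-B)}{B\,\hat B_m}
\]
is the same algebraic identity the paper uses, but you bound it more carefully: you observe that $\|A/B\|_2=\|b\|_2\le\gamma/\xi$ and that $\hat B_m\ge\xi$ deterministically, which yields
\[
\|b-\tilde b_m\|_2\le \xi^{-1}\|\hat A_m-A\|_2+\gamma\xi^{-2}|\hat B_m-B|
\]
with \emph{no} residual factor of $B=\Ebb[f(\cdot)]$. The paper's corresponding step (its inequality (A.8)) is looser: it bounds $\|d/e-d_m/e_m\|$ by $(\gamma|e_m-e|+\|d-d_m\|_2\,|e|)/\xi^2$, leaving a spurious factor $|e|\lesssim\exp(R^2/2)$ that is the sole source of the $\exp(R^2)$ in the lemma. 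With your tighter cancellation, and your bound $\Var(f(x+\sqrt{1-t}Z))\lesssim R^2+p$ from Lipschitz growth, you actually prove the stronger estimate $\mathcal O\bigl((R^2+p)/m\bigr)$, which you then deliberately loosen to the exponential form to match the statement. You also replace the paper's symmetrization trick (introducing an independent copy $\bZ'$ and using Lipschitzness of $\nabla f$ and $f$ together with $\Ebb\|Z_1-Z_1'\|^2=2p$, giving $2p\gamma^2/m$) with direct one-sample variance bounds from $\|\nabla f\|_2\le\gamma$, giving $\gamma^2/m$ with no $p$ factor. Both routes are correct; yours is simpler and sharper.

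One remark worth flagging: because your pointwise estimate eliminates $|e|$, the obstacle you list at the end (reconciling polynomial growth with the exponential form) is not a real one — $R^2+p\le 2p\exp(R^2)$ trivially — and the parenthetical claim that a factor of $p$ enters ``through coordinatewise accounting in the gradient sum'' is not needed and, with the $\ell_2$-bound $\|\nabla f\|_2\le\gamma$ you invoke, not actually true; you simply get a $p$-free $\gamma^2/m$ there and the stated $\mathcal O(p/m)$ is an overbound. The sharper polynomial-in-$R$ estimate you obtain could in fact be propagated into Lemma~\ref{lemma6} (with $R\asymp\sqrt{\log m}$ it gives $\mathcal O\bigl((p+\log m)/m\bigr)$ in place of $\mathcal O(p/\log m)$), improving Theorem~\ref{th1}, so this is a genuine tightening, not just a cosmetic one.
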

%%%%%%%%
\begin{proof}
Denote two independent sets of independent copies of $Z\sim N(0, \bI_p)$,
that is, $\bZ=\{Z_1,\ldots,Z_m\}$ and $\bZ^{\prime}=\{Z_1^{\prime},\ldots,Z_m^{\prime}\}$.
For notation convenience, we  denote
\begin{align*}
&d=\Ebb_{Z}\nabla f(x+\sqrt{1-t}Z),~ d_m=\frac{\sum_{i=1}^m\nabla f(x+\sqrt{1-t}Z_i)}{m},\\
&e=\Ebb_{Z} f(x+\sqrt{1-t}Z),~ e_m=\frac{\sum_{i=1}^m f(x+\sqrt{1-t}Z_i)}{m},\\
&d_m^{\prime}=\frac{\sum_{i=1}^m\nabla f(x+\sqrt{1-t}Z_i^{\prime})}{m},~e_m^{\prime}=\frac{\sum_{i=1}^m f(x+\sqrt{1-t}Z_i^{\prime})}{m}.
\end{align*}
Due to $d-d_m=\Ebb\left[d_m^{\prime}-d_m|\bZ\right]$,
then $\|d-d_m\|^2_2\leq \Ebb\left[\|d_m^{\prime}-d_m\|^2_2|\bZ\right]$.
Then,
\begin{align}\label{mm1}
\Ebb\|d-d_m\|^2
&\leq \Ebb\left[\Ebb[\|d_m^{\prime}-d_m\|^2_2|\bZ]\right]
=\Ebb\|d_m^{\prime}-d_m\|^2_2\notag\\
&=\frac{\Ebb_{Z_1,Z_1^{\prime}}\left\|\nabla f(x+\sqrt{1-t}Z_1)-\nabla f(x+\sqrt{1-t}Z_1^{\prime})\right\|^2_2}{m}\notag\\
&\leq \frac{(1-t)\gamma^2}{m}\Ebb_{Z_1,Z_1^{\prime}}\left\|Z_1-Z_1^{\prime}\right\|^2_2\notag\\
&\leq \frac{2p\gamma^2}{m},
\end{align}
where the second inequality holds by (\textbf{A1}). %\eqref{L2}.
Similarly, we also have
\begin{align}\label{mm2}
\Ebb|e-e_m|^2
&\leq \Ebb|e_m^{\prime}-e_m|^2\notag\\
&=\frac{\Ebb_{Z_1,Z_1^{\prime}}
\left|f(x+\sqrt{1-t}Z_1)-f(x+\sqrt{1-t}Z_1^{\prime})\right|^2}{m}\notag\\
&\leq \frac{(1-t)\gamma^2}{m}\Ebb_{Z_1,Z_1^{\prime}}\left\|Z_1-Z_1^{\prime}\right\|^2_2\notag\\
&\leq \frac{2p\gamma^2}{m},
\end{align}
where the second inequality holds due to (\textbf{A1}). %\eqref{L1}.
Thus, by (\ref{mm1}) and (\ref{mm2}), it follows that
\begin{align}\label{mm3}
\underset{x\in \mathbb{R}^p,t\in [0,1]}{\sup}\Ebb\left\|d-d_m\right\|_2^2
\leq \frac{2p\gamma^{2}}{m},
\end{align}
\begin{align}\label{mm4}
\underset{x\in \mathbb{R}^p,t\in [0,1]}{\sup}\Ebb|e-e_m|^2\leq \frac{2p\gamma^2}{m}.
\end{align}
%Since  $f$ has the lower  bound greater than 0, then there exists one finite and positive constant $\xi$ such that
%\begin{align}\label{boundf}
% f \geq \xi>0.
%\end{align}
%Then, by\eqref{L1}, \eqref{L2} and \eqref{boundf},
Then, by (\textbf{A1}) and (\textbf{A2}),
through some simple calculation, it yields that
\begin{align}\label{mm5}
\|b(x,t)-\tilde{b}_m(x,t)\|_2
&=\left\|\frac{d}{e}-\frac{d_m}{e_m}\right\|_2\notag\\
&\leq \frac{\|d\|_2|e_m-e|+\|d-d_m\|_2|e|}{|ee_m|}\notag\\
&\leq \frac{\gamma|e_m-e|+\|d-d_m\|_2|e|}{\xi^2}.
\end{align}
%where the last inequality holds
%due to $f$ being $\gamma$-Lipschitz continuous.
%%%Hence
%%%\begin{align}\label{mm5}
%%%\|b(x,t)-\bar{b}_m(x,t)\|_2^2
%%%\leq  3\frac{\gamma^2|e_m-e|^2+\|d-d_m\|_2^2(f(x)^2+\gamma^2p)}{\xi^4}.
%%%\end{align}
Let $R>0$, then
\begin{align}\label{mm6}
\underset{\|x\|_2\leq R}\sup f(x)\leq \mathcal{O}\left(\exp(R^2/2)\right).
\end{align}
Therefore, by (\ref{mm3})-(\ref{mm6}), it can be concluded  that
\begin{align*}
\underset{\|x\|_2 \leq R,t\in [0,1]}{\sup} \Ebb\left[\|b(x,t)-\tilde{b}_m(x,t)\|_2^2\right]
\leq \mathcal{O}\left(\frac{p\exp(R^2)}{m}\right).
\end{align*}

Moreover, if $f$ has the finite upper bound, that is, there exists one finite and positive constant $\zeta$ such that $f\leq\zeta$. Then, similar to (\ref{mm5}), it follows that for all $x\in \mathbb{R}^p$ and $t\in[0,1]$,
\begin{align}\label{mm55}
\|b(x,t)-\tilde{b}_m(x,t)\|_2^2
&\leq  2\frac{\gamma^2|e_m-e|^2+\zeta^2\|d-d_m\|_2^2}{\xi^4}.
\end{align}
Then, by (\ref{mm3})-(\ref{mm4}) and (\ref{mm55}), it follows that
\begin{align*}
\underset{x\in\mathbb{R}^p,t\in [0,1]}{\sup} \mathbb{E}\left[\|b(x,t)-\tilde{b}_m(t,x)\|_2^2\right]
\leq \mathcal{O}\left(\frac{p}{m}\right).
\end{align*}
\end{proof}
%%%%%%%%%%%%%%%%%%%%%%%%
\begin{lemma}\label{lemma5}
%Assume that $f$ is $\gamma$-Lipschitz continuous and has the lower bound greater than 0,
%that is, $f\geq \xi >0$ for a positive and finite constant $\xi$,
Assume (\textbf{A1}) and (\textbf{A2}) hold,
then for $k=0,1,\ldots,K$,
\begin{align*}
E[\|\widetilde{Y}_{t_{k}}\|^2_2]
\leq  \frac{6\gamma^2}{\xi^2}+3p.
\end{align*}
\end{lemma}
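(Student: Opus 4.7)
The plan is to unroll the Euler--Maruyama recursion and control the accumulated drift and noise separately, using a uniform deterministic bound on $\tilde{b}_m$ that falls straight out of (\textbf{A1}) and (\textbf{A2}). Since $\widetilde{Y}_{t_0}=0$, telescoping the iteration in Algorithm~\ref{alg:1} gives
$$\widetilde{Y}_{t_k}=\sum_{j=0}^{k-1} s\,\tilde{b}_m(\widetilde{Y}_{t_j},t_j)+\sqrt{s}\sum_{j=0}^{k-1}\epsilon_{j+1},$$
so the whole problem reduces to bounding these two sums in~$L^2$.

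For the drift sum, Lipschitz continuity of $f$ with constant $\gamma$ gives $\|\nabla f\|_2\le\gamma$, while (\textbf{A2}) is $f\ge\xi>0$. Since $\tilde{b}_m(x,t)$ in~(\ref{drifte1}) is a ratio whose numerator is an average of $\nabla f$-values and whose denominator is an average of $f$-values, this yields immediately the \emph{pathwise} bound $\|\tilde{b}_m(x,t)\|_2\le\gamma/\xi$, uniform in $x$, $t$, and in the Monte Carlo samples $Z_1,\dots,Z_m$. Combined with $ks\le Ks=1$ and the triangle inequality,
$$\Big\|\sum_{j=0}^{k-1} s\,\tilde{b}_m(\widetilde{Y}_{t_j},t_j)\Big\|_2\le ks\cdot\frac{\gamma}{\xi}\le\frac{\gamma}{\xi}.$$
For the noise sum, independence of the standard Gaussians $\epsilon_{j+1}$ and $\Ebb\|\epsilon_{j+1}\|_2^2=p$ give $\Ebb\big\|\sqrt{s}\sum_{j=0}^{k-1}\epsilon_{j+1}\big\|_2^2=ksp\le p$.

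Combining the two through an elementary $\|a+b\|_2^2\le 2(\|a\|_2^2+\|b\|_2^2)$ (or the slightly looser three-term version on the implicit decomposition $\widetilde{Y}_{t_k}=0+\sum s\tilde{b}_m+\sqrt{s}\sum\epsilon_{j+1}$, with an additional Young-type split on the drift piece to recover the exact numerical constants in the statement) and taking expectations yields $\Ebb\|\widetilde{Y}_{t_k}\|_2^2\le 6\gamma^2/\xi^2+3p$, as claimed. I do not expect any real obstacle here: the reason the argument is so short is precisely that, under (\textbf{A2}), the Monte Carlo drift $\tilde{b}_m$ is \emph{deterministically} bounded by $\gamma/\xi$, so no Gr\"onwall-type iteration on $\Ebb\|\widetilde{Y}_{t_k}\|_2^2$ is needed---the recursion decouples completely from the past after a single triangle inequality, which is the payoff of having the lower bound $f\ge\xi$ available rather than merely integrable tails on $f$.
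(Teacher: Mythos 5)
Your proof is correct, and it takes a genuinely different route from the paper's. The paper proceeds one step at a time: it writes $\widetilde{Y}_{t_{k+1}} = \Theta_{k,t_{k+1}} + (B_{t_{k+1}}-B_{t_k})$, expands $\|\Theta_{k,t}\|_2^2$, absorbs the cross term $2(t-t_k)\widetilde{Y}_{t_k}^{\top}\tilde{b}_m$ with Young's inequality to get the recursion $\Ebb\|\widetilde{Y}_{t_{k+1}}\|_2^2 \leq (1+s)\Ebb\|\widetilde{Y}_{t_k}\|_2^2 + (s+s^2)\gamma^2/\xi^2 + sp$, and then iterates from $\widetilde{Y}_{t_0}=0$, using $(1+s)^K \leq e$. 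You instead unroll the Euler--Maruyama scheme completely into a drift sum and a Brownian sum, exploit the pathwise bound $\|\tilde{b}_m\|_2 \leq \gamma/\xi$ (which both proofs rely on, though the paper uses it only per step) to bound the drift sum deterministically by $\gamma/\xi$, use orthogonality of the i.i.d.\ Gaussian increments to get $skp \leq p$ for the noise sum, and finish with a single $\|a+b\|_2^2 \leq 2\|a\|_2^2+2\|b\|_2^2$. This yields $\Ebb\|\widetilde{Y}_{t_k}\|_2^2 \leq 2\gamma^2/\xi^2 + 2p$, which is in fact \emph{tighter} than the paper's $6\gamma^2/\xi^2+3p$; your closing remark about needing an extra Young-type split to ``recover the exact numerical constants'' is unnecessary---there is nothing to recover, since your bound simply dominates theirs. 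The telescoping argument is cleaner here, and the paper's recursive bookkeeping is not really buying anything in this lemma; its payoff is that the same template is reused verbatim for the $\varepsilon$-regularized iterate in Lemma A.7, but your unrolling approach would port just as well there with $\gamma/\xi$ replaced by the corresponding deterministic bound on $\tilde b_m$.
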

%%%%
\begin{proof}
Define
$\Theta_{k,t}=\widetilde{Y}_{t_k}+(t-t_k)\tilde{b}_m(\widetilde{Y}_{t_k},t_k)$
and $\widetilde{Y}_{t}=\Theta_{k,t}+B_t-B_{t_k}$, where $t_k \leq t \leq t_{k+1}$ with $k=0,1,\ldots,K-1$.
%Since $f$ is $\gamma$-Lipschitz continuous and $f\geq \xi >0$,
By (\textbf{A1}) and (\textbf{A2}), it follows that for all $x \in \mathbb{R}^p$ and $t\in[0,1]$,
\begin{align}\label{eq2}
\|b(x,t)\|_2^2\leq \frac{\gamma^2}{\xi^2},~~ \|\tilde{b}_m(x,t)\|_2^2 \leq \frac{\gamma^2}{\xi^2}.
\end{align}
Then, by (\ref{eq2}),  we have
\begin{align*}
\|\Theta_{k,t}\|^2_2&=\|\widetilde{Y}_{t_k}\|^2_2+(t-t_k)^2\|\tilde{b}_m(\widetilde{Y}_{t_k},t_k)\|^2_2
+2(t-t_k)\widetilde{Y}_{t_k}^{\top}\tilde{b}_m(\widetilde{Y}_{t_k},t_k)\\
&\leq (1+s)\|\widetilde{Y}_{t_k}\|^2_2+\frac{(s+s^2)\gamma^2}{\xi^2}.
\end{align*}
Further, we can get
\begin{align*}
\Ebb[\|\widetilde{Y}_{t}\|^2_2|\widetilde{Y}_{t_k}]&=\Ebb[\|\Theta_{k,t}\|^2_2|\widetilde{Y}_{t_k}]+(t-t_k)p\\
&\leq(1+s)\Ebb\|\widetilde{Y}_{t_k}\|^2_2+\frac{(s+s^2)\gamma^2}{\xi^2}+sp.
\end{align*}
Therefore,
\begin{align*}
\Ebb[\|\widetilde{Y}_{t_{k+1}}\|^2_2]
\leq(1+s)\Ebb\|\widetilde{Y}_{t_k}\|^2_2+\frac{(s+s^2)\gamma^2}{\xi^2}+sp.
\end{align*}
Since $\widetilde{Y}_{t_0}=0$, then by induction, we have
\begin{align*}
\Ebb[\|\widetilde{Y}_{t_{k+1}}\|^2_2]
\leq \frac{6\gamma^2}{\xi^2}+3p.
\end{align*}
\end{proof}
%%%%%%%%%%%%%%%%%%%%%%%%
\begin{lemma}\label{lemma6}
%If $f$  and $\nabla f$ are Lipschitz continuous, and $f$ has the lower bound greater than 0,
Assume (\textbf{A1}) and (\textbf{A2}) hold,
then for $k=0,1,\ldots,K$,
\begin{align*}
\Ebb\left\|b(\widetilde{Y}_{t_k},t_k)-\tilde{b}_m(\widetilde{Y}_{t_k},t_k)\right\|_2^2\leq
\mathcal{O}\left(\frac{p}{\log(m)}\right).
\end{align*}
Moreover, if $f$  has the finite upper bound, then %for $k=0,1,\ldots,K$,
\begin{align*}
\Ebb\left\|b(\widetilde{Y}_{t_k},t_k)-\tilde{b}_m(\widetilde{Y}_{t_k},t_k)\right\|_2^2\leq
\mathcal{O}\left(\frac{p}{m}\right).
\end{align*}
\end{lemma}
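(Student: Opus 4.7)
The plan is to combine the pointwise Monte Carlo error bound from Lemma \ref{lemma4} with the second-moment control of $\widetilde{Y}_{t_k}$ from Lemma \ref{lemma5}, via a truncation argument with truncation radius tuned to $m$.

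First I would observe a key independence: the i.i.d.\ samples $Z_1,\dots,Z_m$ used to form $\tilde{b}_m$ at step $k$ (as defined in (\ref{drifte1})) are drawn independently of the path $\widetilde{Y}_{t_0},\dots,\widetilde{Y}_{t_k}$. Conditioning on $\widetilde{Y}_{t_k}=x$ therefore reduces the inner expectation to the pointwise Monte Carlo error already controlled in Lemma \ref{lemma4}. For any $R>0$, set the event $A_R=\{\|\widetilde{Y}_{t_k}\|_2\le R\}$ and split
\begin{equation*}
\Ebb\|b(\widetilde{Y}_{t_k},t_k)-\tilde{b}_m(\widetilde{Y}_{t_k},t_k)\|_2^2
= \Ebb[\,\cdot\,\mathbf{1}_{A_R}] + \Ebb[\,\cdot\,\mathbf{1}_{A_R^c}].
\end{equation*}

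On $A_R$, I would condition on $\widetilde{Y}_{t_k}$, apply the tower property, and use the first part of Lemma \ref{lemma4} to get $\Ebb[\,\cdot\,\mathbf{1}_{A_R}]\le \mathcal{O}(p\exp(R^2)/m)$. On $A_R^c$, I would use the crude uniform bound $\|b(x,t)-\tilde{b}_m(x,t)\|_2^2\le 4\gamma^2/\xi^2$ (from the $\|b\|_2,\|\tilde{b}_m\|_2\le\gamma/\xi$ bounds already noted in Remark (iii) and used in the proof of Lemma \ref{lemma5}), together with Markov's inequality and Lemma \ref{lemma5}:
\begin{equation*}
\Ebb[\,\cdot\,\mathbf{1}_{A_R^c}]\;\le\;\frac{4\gamma^2}{\xi^2}\,\mathbb{P}(\|\widetilde{Y}_{t_k}\|_2>R)\;\le\;\frac{4\gamma^2}{\xi^2}\cdot\frac{6\gamma^2/\xi^2+3p}{R^2}.
\end{equation*}
Summing the two contributions yields a bound of the form $\mathcal{O}(p\exp(R^2)/m)+\mathcal{O}(p/R^2)$.

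To finish the first assertion I would choose $R^2=c\log(m)$ with any $c\in(0,1)$, say $c=1/2$. Then the first term becomes $\mathcal{O}(p\,m^{c-1})$, which decays polynomially in $m$, while the second term is $\mathcal{O}(p/\log m)$, and the latter dominates, giving the stated $\mathcal{O}(p/\log m)$ rate. The main obstacle is precisely this balance: the pointwise Monte Carlo error from Lemma \ref{lemma4} blows up like $\exp(R^2)$ in the truncation radius, whereas Lemma \ref{lemma5} only provides a second-moment (quadratic Markov) tail bound for $\widetilde{Y}_{t_k}$, and these two competing rates force the logarithmic truncation and consequently the logarithmic rate.

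For the second assertion under the additional finite upper bound on $f$, no truncation is needed: the second part of Lemma \ref{lemma4} already gives $\sup_{x,t}\Ebb\|b(x,t)-\tilde{b}_m(x,t)\|_2^2\le \mathcal{O}(p/m)$, so conditioning on $\widetilde{Y}_{t_k}$ and applying the tower property immediately yields the $\mathcal{O}(p/m)$ bound.
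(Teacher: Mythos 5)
Your proposal is correct and follows the same route as the paper: split on the event $\{\|\widetilde{Y}_{t_k}\|_2\le R\}$, bound the truncated part via the first assertion of Lemma \ref{lemma4}, bound the tail via the uniform bound $\|b\|_2,\|\tilde b_m\|_2\le\gamma/\xi$ together with Markov's inequality and Lemma \ref{lemma5}, and then choose $R^2\asymp\log m$ (the paper takes $R=(\log(m)/2)^{1/2}$, matching your $c=1/2$). Your explicit remarks on the independence of the fresh Monte Carlo samples from the path and on the $\exp(R^2)$ versus $1/R^2$ tension that forces the logarithmic rate are a useful clarification of reasoning the paper leaves implicit.
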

\begin{proof}
Let $R>0$, then
\begin{equation}\label{erro1}
\begin{split}
\Ebb\left\|b(\widetilde{Y}_{t_k},t_k)-\tilde{b}_m(\widetilde{Y}_{t_k},t_k)\right\|_2^2
&=\Ebb_{\widetilde{Y}_{t_k}}\Ebb_Z\left[\left\|b(\widetilde{Y}_{t_k},t_k)-\tilde{b}_m(\widetilde{Y}_{t_k},t_k)\right\|_2^21(\|\widetilde{Y}_{t_k}\|_2\leq R)\right]\\
&~~~+\Ebb_{\widetilde{Y}_{t_k}}\Ebb_Z\left[\left\|b(\widetilde{Y}_{t_k},t_k)-\tilde{b}_m(\widetilde{Y}_{t_k},t_k)\right\|_2^21(\|\widetilde{Y}_{t_k}\|_2> R)\right].
\end{split}
\end{equation}
Next, we need to bound the two terms of (\ref{erro1}).
First, by Lemma \ref{lemma4}, we have
$$
\Ebb_{\widetilde{Y}_{t_k}}\Ebb_Z\left[\left\|b(\widetilde{Y}_{t_k},t_k)-\tilde{b}_m(\widetilde{Y}_{t_k},t_k)\right\|_2^21(\|\widetilde{Y}_{t_k}\|_2\leq R)\right]\leq  \mathcal{O}\left(\frac{p\exp(R^2)}{m}\right).
$$
Secondly, combining (\ref{eq2}) and Lemma \ref{lemma5} with Markov inequality, it yields that
$$
\Ebb_{\widetilde{Y}_{t_k}}\Ebb_Z\left[\left\|b(\widetilde{Y}_{t_k},t_k)-\tilde{b}_m(\widetilde{Y}_{t_k},t_k)\right\|_2^21(\|\widetilde{Y}_{t_k}\|_2> R)\right]
\leq \mathcal{O}\left(p/R^2\right).
$$
Thence
\begin{align}\label{err1}
\Ebb\left\|b(\widetilde{Y}_{t_k},t_k)-\tilde{b}_m(\widetilde{Y}_{t_k},t_k)\right\|_2^2
\leq \mathcal{O}\left(\frac{p\exp(R^2)}{m}\right)
+\mathcal{O}\left(p/R^2\right).
\end{align}
Set $R=\left(\frac{\log(m)}{2}\right)^{1/2}$ in (\ref{err1}), then we have
\begin{align*}
\Ebb\left\|b(\widetilde{Y}_{t_k},t_k)-\tilde{b}_m(\widetilde{Y}_{t_k},t_k)\right\|_2^2\leq
\mathcal{O}\left(\frac{p}{\log(m)}\right).
\end{align*}

Moreover, if $f$ has the finite upper bound, then by Lemma \ref{lemma4}, we can similarly get
$$
\Ebb\left\|b(\widetilde{Y}_{t_k},t_k)-\tilde{b}_m(\widetilde{Y}_{t_k},t_k)\right\|_2^2=
\Ebb_{\widetilde{Y}_{t_k}}\Ebb_{Z}\left[\left\|b(\widetilde{Y}_{t_k},t_k)-\tilde{b}_m(\widetilde{Y}_{t_k},t_k)\right\|_2^2\right]
\leq \mathcal{O}\left(\frac{p}{m}\right).
$$
This completes the proof.
\end{proof}
%%%%%%%%%%%%%%%%%%%%%%%%%%%%%%%%%%%%%%
\subsection{Proof of Theorem \ref{th1}}
%%%%%%%%%%%%%%%%%%%%%%%%%%%
\begin{proof}
From the definition of $\widetilde{Y}_{t_k}$ and $X_{t_k}$, we have
\begin{align*}
&\|\widetilde{Y}_{t_k}-X_{t_k}\|_2^2\\
&\leq \|\widetilde{Y}_{t_{k-1}}-X_{t_{k-1}}\|_2^2
+\left(\int_{t_{k-1}}^{t_k}\|b(X_u,u)-\tilde{b}_m(\widetilde{Y}_{t_{k-1}},t_{k-1})\|_2\mathrm{d} u\right)^2\\
&~~~+2\|\widetilde{Y}_{t_{k-1}}-X_{t_{k-1}}\|_2
\left(\int_{t_{k-1}}^{t_k}\|b(X_u,u)-\tilde{b}_m(\widetilde{Y}_{t_{k-1}},t_{k-1})\|_2\mathrm{d} u\right)\\
&\leq (1+s) \|\widetilde{Y}_{t_{k-1}}-X_{t_{k-1}}\|_2^2
+(1+s)\int_{t_{k-1}}^{t_k}\|b(X_u,u)-\tilde{b}_m(\widetilde{Y}_{t_{k-1}},t_{k-1})\|_2^2\mathrm{d} u\\
&\leq (1+s) \|\widetilde{Y}_{t_{k-1}}-X_{t_{k-1}}\|_2^2
+2(1+s)\int_{t_{k-1}}^{t_k}\|b(X_u,u)-b(\widetilde{Y}_{t_{k-1}},t_{k-1})\|_2^2\mathrm{d} u\\
&~~~+2s(1+s)\|b(\widetilde{Y}_{t_{k-1}},t_{k-1})-\tilde{b}_m(\widetilde{Y}_{t_{k-1}},t_{k-1})\|_2^2\\
&\leq (1+s) \|\widetilde{Y}_{t_{k-1}}-X_{t_{k-1}}\|_2^2+4C_1^2(1+s)\int_{t_{k-1}}^{t_k}[\|X_u-\widetilde{Y}_{t_{k-1}}\|_2^2+|u-t_{k-1}|]\mathrm{d} u\\
&~~~+2s(1+s)\|b(\widetilde{Y}_{t_{k-1}},t_{k-1})-\tilde{b}_m(\widetilde{Y}_{t_{k-1}},t_{k-1})\|_2^2\\
&\leq (1+s) \|\widetilde{Y}_{t_{k-1}}-X_{t_{k-1}}\|_2^2
+8C_1^2(1+s)\int_{t_{k-1}}^{t_k}\|X_u-X_{t_{k-1}}\|_2^2\mathrm{d} u\\
&~~~+8C_1^2s(1+s)\|X_{t_{k-1}}-\widetilde{Y}_{t_{k-1}}\|_2^2+4C_1^2(1+s)s^2\\
&~~~+2s(1+s)\|b(\widetilde{Y}_{t_{k-1}},t_{k-1})-\tilde{b}_m(\widetilde{Y}_{t_{k-1}},t_{k-1})\|_2^2\\
&\leq (1+s+8C_1^2(s+s^2))\|\widetilde{Y}_{t_{k-1}}-X_{t_{k-1}}\|_2^2
+8C_1^2(1+s)\int_{t_{k-1}}^{t_k}\|X_u-X_{t_{k-1}}\|_2^2\mathrm{d} u\\
&~~~+4C_1^2(1+s)s^2+2s(1+s)\|b(\widetilde{Y}_{t_{k-1}},t_{k-1})-\tilde{b}_m(\widetilde{Y}_{t_{k-1}},t_{k-1})\|_2^2,
\end{align*}
where the second inequality holds duo to $2ab\leq s a^2+\frac{b^2}{s}$,
the fourth inequality holds by (\ref{cond3}).
Then,
\begin{align}\label{th1eq1}
\Ebb\|\widetilde{Y}_{t_k}-X_{t_k}\|_2^2
&\leq (1+s+8C_1^2(s+s^2)) \Ebb\|\widetilde{Y}_{t_{k-1}}-X_{t_{k-1}}\|_2^2\notag\\
&~~~+8C_1^2(1+s)\int_{t_{k-1}}^{t_k}\Ebb\|X_u-X_{t_{k-1}}\|_2^2\mathrm{d} u+4C_1^2(s^2+s^3)\notag\\
&~~~+2s(1+s)\Ebb[\|b(\widetilde{Y}_{t_{k-1}},t_{k-1})-\tilde{b}_m(\widetilde{Y}_{t_{k-1}},t_{k-1})\|_2^2]\notag\\
&\leq(1+s+8C_1^2(s+s^2)) \Ebb\|\widetilde{Y}_{t_{k-1}}-X_{t_{k-1}}\|_2^2+h(s)\notag\\
&~~~+4C_1^2(s^2+s^3)+2s(1+s)\Ebb[\|b(\widetilde{Y}_{t_{k-1}},t_{k-1})-\tilde{b}_m(\widetilde{Y}_{t_{k-1}},t_{k-1})\|_2^2]\notag\\
&\leq(1+s+8C_1^2(s+s^2)) \Ebb\|\widetilde{Y}_{t_{k-1}}-X_{t_{k-1}}\|_2^2+h(s)\notag\\
&~~~+4C_1^2(s^2+s^3)+2s(1+s)\mathcal{O}\left(\frac{p}{\log(m)}\right),
\end{align}
where $h(s)=8C_1^2(s+s^2)[4C_0\exp(2C_0)(C_0+p)s^2+2C_0s^2+2ps]$, and the last inequality holds by Lemma \ref{lemma6}.
Owing to $\widetilde{Y}_{t_0}=X_{t_0}=0$, we can conclude that
\begin{align*}
&\Ebb\|\widetilde{Y}_{t_K}-X_{t_K}\|_2^2\\
&\leq\frac{(1+s+8C_1^2(s+s^2))^K-1}{s+8C_1^2(s+s^2)}
\left[h(s)+4C_1^2(s^2+s^3)+2(s+s^2)\mathcal{O}\left(\frac{p}{\log(m)}\right)\right]\\
&\leq \mathcal{O}(ps)+\mathcal{O}\left(\frac{p}{\log(m)}\right).
\end{align*}
Therefore,
\begin{align*}
W_2(Law(\widetilde{Y}_{t_K}),\mu)
\leq \mathcal{O}(\sqrt{ps})+\mathcal{O}\left(\sqrt{\frac{p}{\log(m)}}\right).
\end{align*}
\end{proof}
%%%%%%%%%%%%%%%%%%%%%%%%%%%%%%%%%%%%%%%%%%%%%%%%%%
\subsection{Proof of Theorem \ref{th2}}
\begin{proof}
This proof is same as that of Theorem \ref{th1}.  Similar to (\ref{th1eq1}), by Lemma \ref{lemma6}, it yields that
\begin{align*}
\Ebb\|\widetilde{Y}_{t_k}-X_{t_k}\|_2^2
&\leq(1+s+8C_1^2(s+s^2)) \Ebb\|\widetilde{Y}_{t_{k-1}}-X_{t_{k-1}}\|_2^2+h(s)\\
&~~~+4C_1^2(s^2+s^3)+2s(1+s)\mathcal{O}\left(\frac{p}{m}\right).
\end{align*}
Then,  we also have
\begin{align*}
&\Ebb\|\widetilde{Y}_{t_K}-X_{t_K}\|_2^2\\
&\leq\frac{(1+s+8C_1^2(s+s^2))^K-1}{s+8C_1^2(s+s^2)}
\left[h(s)+4C_1^2(s^2+s^3)+2(s+s^2)\mathcal{O}\left(\frac{1}{m}\right)\right]\\
&\leq \mathcal{O}(ps)+\mathcal{O}\left(\frac{p}{m}\right).
\end{align*}
Hence, it follows that
\begin{align*}
W_2(Law(\widetilde{Y}_{t_K}),\mu)
\leq \mathcal{O}(\sqrt{ps})+\mathcal{O}\left(\sqrt{\frac{p}{m}}\right).
\end{align*}
\end{proof}

%%%%%%%%%%%%%%%%%%%%%%%%%%%%%%%%%%%%%%%%%%%%%%%%%%
\subsection{Preliminary lemmas for Theorem \ref{th3}}
To prove Theorem \ref{th3}, we first prove the Lemmas \ref{lemma41}-\ref{lemma61}.
% similar to \ref{lemma4}-\ref{lemma6}.
%%%%%%%%%%%%%%%%%%%%%%
\begin{lemma}\label{lemma41}
%If $f$  and $\nabla f$ are Lipschitz continuous,
Assume (\textbf{A1}) holds,
then for any $R>0$,
\begin{align*}
\underset{\|x\|_2 \leq R,t\in [0,1]}{\sup} \Ebb\left[\|b(x,t)-\tilde{b}_m(x,t)\|_2^2\right]
\leq
\mathcal{O}\left(\frac{p\exp(R^2)(C_p)^4}{m\varepsilon^4}\right)
+\mathcal{O}\left(\frac{p(C_p)^2}{m\varepsilon^2}\right),
\end{align*}
where $C_p=(2\pi)^{p/2}C^{-1}$.
Moreover, if $f$ has the finite upper bound, then
\begin{align*}
\underset{x\in\mathbb{R}^p,t\in [0,1]}{\sup} \Ebb\left[\|b(x,t)-\tilde{b}_m(t,x)\|_2^2\right]
\leq \mathcal{O}\left(\frac{p(C_p)^4}{m\varepsilon^4}\right)
+\mathcal{O}\left(\frac{p(C_p)^2}{m\varepsilon^2}\right).
\end{align*}
\end{lemma}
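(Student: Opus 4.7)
The plan is to recycle the proof of Lemma \ref{lemma4} after the substitution $f \to f_\varepsilon := (1-\varepsilon) f + \varepsilon$, tracking explicitly how the normalising-constant factor $C_p = (2\pi)^{p/2}/C$ propagates through the estimates. Two structural observations allow the substitution: $f_\varepsilon \ge \varepsilon$, so $\varepsilon$ plays the role of $\xi$ in the denominators; and both $f_\varepsilon$ and $\nabla f_\varepsilon = (1-\varepsilon)\nabla f$ are Lipschitz with constant at most $(1-\varepsilon)\gamma \le \gamma$, so in particular $\|\nabla f_\varepsilon\|_2 \le \gamma$.

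With $d, e, d_m, e_m$ defined as in the proof of Lemma \ref{lemma4} but from $f_\varepsilon$, the independent-copy coupling $d - d_m = \mathbb{E}[d_m' - d_m \mid \mathbf{Z}]$ combined with Jensen's inequality and the Lipschitz property yields, exactly as in (A.1)--(A.4),
\[
\sup_{x,t}\mathbb{E}\|d-d_m\|_2^2 \le \frac{2p\gamma^2}{m}, \qquad \sup_{x,t}\mathbb{E}|e-e_m|^2 \le \frac{2p\gamma^2}{m}.
\]
Next I would use the identity
\[
b - \tilde b_m = \frac{d(e_m - e)}{e\,e_m} + \frac{d - d_m}{e_m},
\]
together with $e e_m \ge \varepsilon^2$, $e_m \ge \varepsilon$ and $\|d\|_2 \le \gamma$, then square and apply $(a+b)^2 \le 2a^2 + 2b^2$ to obtain
\[
\|b - \tilde b_m\|_2^2 \le \frac{2\gamma^2 |e - e_m|^2}{\varepsilon^4} + \frac{2\|d - d_m\|_2^2}{\varepsilon^2}.
\]
Taking expectations against the variance bounds above gives the skeleton $\mathcal{O}\bigl(p\gamma^4/(m\varepsilon^4)\bigr) + \mathcal{O}\bigl(p\gamma^2/(m\varepsilon^2)\bigr)$.

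The remaining task is to connect $\gamma$ to $C_p$ and to account for the locality restriction $\|x\|_2 \le R$. Using the representation $f(y) = C_p \exp(-V(y) + \|y\|_2^2/2)$ and the Lipschitz property, one has $\gamma = \mathcal{O}(C_p)$, so the second summand is $\mathcal{O}\bigl(p C_p^2/(m\varepsilon^2)\bigr)$. For the first summand an additional factor of $C_p^2 \exp(R^2)$ is picked up when one sharpens the a priori bound $\|d\|_2 \le \gamma$ (equivalently, when one bounds $|e| = Q_{1-t}f_\varepsilon(x)$ on the ball) via the pointwise estimate $f(y) \le C_p \exp(\|y\|_2^2/2)$ and a Gaussian moment-generating-function computation, yielding $\mathcal{O}\bigl(p\exp(R^2)C_p^4/(m\varepsilon^4)\bigr)$. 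Combining the two contributions produces the first inequality. For the second assertion, a uniform upper bound $f \le \zeta = \mathcal{O}(C_p)$ replaces the $\exp(R^2)$ factor by a constant and lifts the restriction $\|x\|_2 \le R$, giving the second bound.

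The principal technical obstacle is the pointwise control of $|e|$ (equivalently, of the locally effective bound on $\|d\|_2$) uniformly in $t \in [0,1]$: the direct Gaussian identity $\mathbb{E}\exp(\|x + \sqrt{1-t}Z\|_2^2/2) = t^{-p/2}\exp(\|x\|_2^2/(2t))$ diverges as $t \to 0^+$, so one must treat the endpoint $t = 0$ separately using the crude Lipschitz-based estimate $|e| \le f(0) + \gamma(R + \sqrt{p}) + \varepsilon$, or absorb the $t$-dependent singularity into the implicit constants hidden in the $\mathcal{O}$-notation. Modulo this point, the argument is a line-by-line adaptation of Lemma \ref{lemma4}.
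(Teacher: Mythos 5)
Your overall architecture matches the paper's: independent-copy symmetrization to bound $\Ebb\|d-d_m\|_2^2$ and $\Ebb|e-e_m|^2$ by $2p\gamma^2/m$, then a ratio decomposition with $\varepsilon$ (or, in the paper's normalization, $\varepsilon/(C_p(1-\varepsilon))$) playing the role that $\xi$ played in Lemma~\ref{lemma4}. The substantive difference is your decomposition $b-\tilde b_m=\tfrac{d(e_m-e)}{ee_m}+\tfrac{d-d_m}{e_m}$, in which the factor $e$ in the second piece has been \emph{cancelled}. The paper instead keeps the form $\tfrac{\|d\|_2|e_m-e|+\|d-d_m\|_2|e|}{|ee_m|}$ and lower-bounds the full denominator $ee_m$ by $(\varepsilon/(C_p(1-\varepsilon)))^2$ under both terms; the surviving factor $|e|=Q_{1-t}g(x)+\varepsilon/(C_p(1-\varepsilon))$ in the numerator is then controlled on the ball $\|x\|_2\le R$ via the pointwise estimate $\sup_{\|x\|_2\le R}g(x)\le\mathcal{O}(\exp(R^2/2))$, and it is precisely this step that produces the $\exp(R^2)$ in the first summand. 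Your cancellation bypasses this: it yields $\|b-\tilde b_m\|_2^2\le 2\gamma^2|e_m-e|^2/\varepsilon^4+2\|d-d_m\|_2^2/\varepsilon^2$ pointwise in $x$ and $t$, with no ball restriction, which after taking expectations gives $\mathcal{O}(p\gamma^4/(m\varepsilon^4))+\mathcal{O}(p\gamma^2/(m\varepsilon^2))$ uniformly. This is actually a \emph{stronger} conclusion than the lemma states (it makes the $\exp(R^2)$ factor and the ``$f$ bounded above'' hypothesis for the second inequality unnecessary), and it certainly implies the lemma.

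Two remarks on accuracy of your write-up. First, the paragraph explaining that ``an additional factor of $C_p^2\exp(R^2)$ is picked up when one sharpens the a priori bound $\|d\|_2\le\gamma$'' is not right: sharpening $\|d\|_2\le\gamma$ would only help, not introduce an $\exp(R^2)$, and in your own decomposition no upper bound on $|e|$ is ever needed. The $\exp(R^2)$ in the paper's statement is a by-product of \emph{not} performing the cancellation; with your decomposition it simply does not appear, so you should not try to manufacture it. Second, your final technical remark is well taken and in fact identifies a genuine imprecision in the paper's own argument: the paper invokes $\sup_{\|x\|_2\le R}g(x)\le\mathcal{O}(\exp(R^2/2))$ but what actually enters the numerator is $Q_{1-t}g(x)$, not $g(x)$. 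As you note, the direct Gaussian computation for $Q_{1-t}$ of a quadratic-exponential blows up as $t\to 0^+$, so the correct way to control $Q_{1-t}g(x)$ is via the Lipschitz bound $Q_{1-t}g(x)\le g(x)+\gamma\sqrt{1-t}\,\Ebb\|Z\|_2\le g(x)+\gamma\sqrt{p}$; the paper leaves this implicit. Your own route avoids the issue entirely, which is the cleaner option.
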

%%%%%%%%
\begin{proof}
Denote two independent sets of independent copies of $Z\sim N(0, \bI_p)$ by $\bZ=\{Z_1,\ldots,Z_m\}$ and $\bZ^{\prime}=\{Z_1^{\prime},\ldots,Z_m^{\prime}\}$.
For notation convenience, we  denote
\begin{align*}
&d=\Ebb_{Z}\nabla g(x+\sqrt{1-t}Z),~ d_m=\frac{\sum_{i=1}^m\nabla g(x+\sqrt{1-t}Z_i)}{m},\\
&e=\Ebb_{Z} \left[g(x+\sqrt{1-t}Z)+\frac{\varepsilon }{ C_p(1-\varepsilon)}\right],~ e_m=\frac{\sum_{i=1}^m g(x+\sqrt{1-t}Z_i)}{m}+\frac{\varepsilon}{ C_p(1-\varepsilon)},\\
&d_m^{\prime}=\frac{\sum_{i=1}^m\nabla g(x+\sqrt{1-t}Z_i^{\prime})}{m},~e_m^{\prime}=\frac{\sum_{i=1}^m g(x+\sqrt{1-t}Z_i^{\prime})}{m}+\frac{\varepsilon}{ C_p(1-\varepsilon)},
\end{align*}
where $g(x)=\exp(\|x\|^2_2/2-V(x))$.
Since  $d-d_m=\Ebb\left[d_m^{\prime}-d_m|\bZ\right]$,
we have  $\|d-d_m\|^2_2\leq \Ebb\left[\|d_m^{\prime}-d_m\|^2_2|\bZ\right]$.
By (\textbf{A1}),
%As $f$ and $\nabla f$ are Lipschitz continuous, so are
it yields that $g$ and $\nabla g$ are Lipschitz continuous.
Thus  there exists a finite and positive constant $\gamma$
such that for all $x,y \in \mathbb{R}^p$,
\begin{align}\label{L11}
|g(x)-g(y)|\leq \gamma \|x-y\|_2,
\end{align}
\begin{align}\label{L21}
\|\nabla g(x)- \nabla g(y)\|_2\leq \gamma \|x-y\|_2.
\end{align}
Therefore,
\begin{align}\label{mm11}
\Ebb\|d-d_m\|^2
&\leq \Ebb\left[\Ebb[\|d_m^{\prime}-d_m\|^2_2|\bZ]\right]=\Ebb\|d_m^{\prime}-d_m\|^2_2\notag\\
&=\frac{\Ebb_{Z_1,Z_1^{\prime}}\left\|\nabla g(x+\sqrt{1-t}Z_1)-\nabla g(x+\sqrt{1-t}Z_1^{\prime})\right\|^2_2}{m}\notag\\
&\leq \frac{(1-t)\gamma^2}{m}\Ebb_{Z_1,Z_1^{\prime}}\left\|Z_1-Z_1^{\prime}\right\|^2_2\notag\\
&\leq \frac{2p\gamma^2}{m},
\end{align}
where the second inequality follows from (\ref{L21}).
Similarly, we also have
\begin{align}\label{mm21}
\Ebb|e-e_m|^2
&\leq \Ebb|e_m^{\prime}-e_m|^2\notag\\
&=\frac{\Ebb_{Z_1,Z_1^{\prime}}
\left|g(x+\sqrt{1-t}Z_1)-g(x+\sqrt{1-t}Z_1^{\prime})\right|^2}{m}\notag\\
&\leq \frac{(1-t)\gamma^2}{m}\Ebb_{Z_1,Z_1^{\prime}}\left\|Z_1-Z_1^{\prime}\right\|^2_2\notag\\
&\leq \frac{2p\gamma^2}{m},
\end{align}
where the second inequality follows from (\ref{L11}).
Hence, by (\ref{mm11}) and (\ref{mm21}), we have
\begin{align}\label{mm31}
\underset{x\in \mathbb{R}^p,t\in [0,1]}{\sup}\Ebb\left\|d-d_m\right\|_2^2
\leq \frac{2p\gamma^{2}}{m},
\end{align}
\begin{align}\label{mm41}
\underset{x\in \mathbb{R}^p,t\in [0,1]}{\sup}\Ebb|e-e_m|^2\leq \frac{2p\gamma^2}{m}.
\end{align}
Then, by (\ref{L11}) and (\ref{L21}), %and $f_{\varepsilon} \geq \varepsilon>0$,
through some simple calculation, it yields that
\begin{align}\label{mm51}
\|b(x,t)-\tilde{b}_m(x,t)\|_2
&=\left\|\frac{d}{e}-\frac{d_m}{e_m}\right\|_2\notag\\
&\leq \frac{\|d\|_2|e_m-e|+\|d-d_m\|_2|e|}{|ee_m|}\notag\\
&\leq \frac{\gamma|e_m-e|+\|d-d_m\|_2|e|}{(\varepsilon /( C_p-C_p\varepsilon))^2}.
\end{align}
%%Hence
%%\begin{align}\label{mm51}
%%\|b(x,t)-\tilde{b}_m(x,t)\|_2^2
%%\leq  2\frac{\gamma^2|e_m-e|^2+\|d-d_m\|_2^2|e|^2}{(\varepsilon /(C_p-C_p\varepsilon))^4}.
%%\end{align}
Let $R>0$, then
\begin{align}\label{mm61}
\underset{\|x\|_2\leq R}\sup g(x)\leq \mathcal{O}\left(\exp(R^2/2)\right).
\end{align}
Therefore, by (\ref{mm31})-(\ref{mm61}), it can be concluded  that
\begin{align*}
\underset{\|x\|_2 \leq R,t\in [0,1]}{\sup} \Ebb\left[\|b(x,t)-\tilde{b}_m(x,t)\|_2^2\right]
&\leq
\mathcal{O}\left(\frac{p\exp(R^2)}{m(\varepsilon /(C_p-C_p\varepsilon))^4}\right)
+
\mathcal{O}\left(\frac{p}{m(\varepsilon /(C_p-C_p\varepsilon))^2}\right)\\
&\leq
\mathcal{O}\left(\frac{p\exp(R^2)(C_p)^4}{m\varepsilon^4}\right)
+
\mathcal{O}\left(\frac{p(C_p)^2}{m\varepsilon^2}\right).
\end{align*}

Moreover,  $f$ has a finite upper bound so does $g$. Then there exists a finite and positive constant $\zeta$ such that $g\leq\zeta$. Similar to (\ref{mm51}), it follows that for all $x\in \mathbb{R}^p$ and $t\in[0,1]$,
\begin{align}\label{mm551}
\|b(x,t)-\tilde{b}_m(x,t)\|_2^2
&\leq  2\frac{\gamma^2|e_m-e|^2+(\zeta+\varepsilon /(C_p-C_p\varepsilon))^2\|d-d_m\|_2^2}{(\varepsilon /(C_p-C_p\varepsilon))^4}.
\end{align}
Then, by (\ref{mm31})-(\ref{mm41}) and (\ref{mm551}), it follows that
\begin{align*}
\underset{x\in\mathbb{R}^p,t\in [0,1]}{\sup} \Ebb\left[\|b(x,t)-\tilde{b}_m(t,x)\|_2^2\right]
&\leq \mathcal{O}\left(\frac{p}{m(\varepsilon /(C_p-C_p\varepsilon))^4}\right)+\mathcal{O}\left(\frac{p}{m(\varepsilon /(C_p-C_p\varepsilon))^2}\right)\\
&\leq \mathcal{O}\left(\frac{p(C_p)^4}{m\varepsilon^4}\right)+\mathcal{O}\left(\frac{p(C_p)^2}{m\varepsilon^2}\right).
\end{align*}
\end{proof}
%%%%%%%%%%%%%%%%%%%%%%%%
\begin{lemma}\label{lemma51}
%Assume that $g$ is $\gamma$-Lipschitz continuous,
Assume (\textbf{A1}) holds,
then for $k=0,1,\ldots,K$,
\begin{align*}
\Ebb[\|\widetilde{Y}_{t_{k}}^{\varepsilon}\|^2_2]
\leq\mathcal{O}\left(\frac{(C_p)^2}{\varepsilon^2}\right)+\mathcal{O}\left(p\right),
\end{align*}
where $C_p=(2\pi)^{p/2}C^{-1}$.
\end{lemma}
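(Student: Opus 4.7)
The plan is to mirror the proof of Lemma \ref{lemma5}, adjusting only the uniform bound on the drift to reflect that condition (\textbf{A2}) is no longer available and must be replaced by the explicit $\varepsilon$-lower bound built into $f_\varepsilon$. The point is that although the pointwise lower bound $\xi$ on $f$ is gone, the regularized density ratio satisfies $f_\varepsilon = (1-\varepsilon)f + \varepsilon \geq \varepsilon$, or equivalently, writing the drift in the $g(x) = \exp(\|x\|_2^2/2 - V(x))$ parametrization used in Lemma \ref{lemma41}, the denominator of $b^\varepsilon$ is bounded below by $\varepsilon/(C_p(1-\varepsilon))$. This replaces the role of $\xi$ in Lemma \ref{lemma5} and yields the factor $C_p/\varepsilon$ in place of $1/\xi$.

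First I would establish the counterpart of inequality (\ref{eq2}): under (\textbf{A1}), the Lipschitz constant of $g$ and $\nabla g$ is a multiple of $\gamma$, so that
\begin{equation*}
\sup_{x,t}\|b^\varepsilon(x,t)\|_2^2 \leq \mathcal{O}\!\left(\frac{C_p^2}{\varepsilon^2}\right),
\qquad
\sup_{x,t}\|\tilde{b}_m^\varepsilon(x,t)\|_2^2 \leq \mathcal{O}\!\left(\frac{C_p^2}{\varepsilon^2}\right),
\end{equation*}
where the second bound uses that the Monte Carlo denominator is also bounded below by $\varepsilon/(C_p(1-\varepsilon))$ deterministically. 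This is exactly the place where the $\varepsilon$-regularization is doing work, and it replaces assumption (\textbf{A2}) used in Lemma \ref{lemma5}.

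Next, following the argument of Lemma \ref{lemma5} verbatim, I would set $\Theta_{k,t}^\varepsilon = \widetilde{Y}_{t_k}^\varepsilon + (t-t_k)\tilde{b}_m^\varepsilon(\widetilde{Y}_{t_k}^\varepsilon, t_k)$ so that $\widetilde{Y}_t^\varepsilon = \Theta_{k,t}^\varepsilon + (B_t - B_{t_k})$ for $t_k \le t \le t_{k+1}$. Expanding $\|\Theta_{k,t}^\varepsilon\|_2^2$ and applying $2ab \leq s a^2 + b^2/s$ together with the uniform drift bound gives
\begin{equation*}
\|\Theta_{k,t}^\varepsilon\|_2^2 \leq (1+s)\|\widetilde{Y}_{t_k}^\varepsilon\|_2^2 + (s + s^2)\,\mathcal{O}\!\left(\frac{C_p^2}{\varepsilon^2}\right).
\end{equation*}
Taking conditional expectation and using $\Ebb\|B_{t_{k+1}} - B_{t_k}\|_2^2 = sp$ for the independent Brownian increment yields the recursion
\begin{equation*}
\Ebb\|\widetilde{Y}_{t_{k+1}}^\varepsilon\|_2^2 \leq (1+s)\,\Ebb\|\widetilde{Y}_{t_k}^\varepsilon\|_2^2 + sp + (s + s^2)\,\mathcal{O}\!\left(\frac{C_p^2}{\varepsilon^2}\right).
\end{equation*}

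Finally, since $\widetilde{Y}_{t_0}^\varepsilon = 0$, I would iterate this recursion: summing a geometric series with ratio $1+s$ over $K = 1/s$ steps and using $(1+s)^K \leq e$ produces $\Ebb\|\widetilde{Y}_{t_k}^\varepsilon\|_2^2 \leq \mathcal{O}(C_p^2/\varepsilon^2) + \mathcal{O}(p)$, as claimed. There is no serious obstacle; the only delicate point is keeping track of the $\varepsilon$-dependence in the drift bound so that the constants appearing in the recursion are truly independent of $k$, which is ensured because the lower bound $\varepsilon/(C_p(1-\varepsilon))$ on the denominator is uniform in $x$ and $t$.
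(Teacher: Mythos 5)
Your proof is correct and follows essentially the same route as the paper: define $\Theta_{k,t}^\varepsilon$ and $\widetilde{Y}_t^\varepsilon$, bound $\|\tilde{b}_m^\varepsilon\|_2^2$ uniformly by $\mathcal{O}(C_p^2/\varepsilon^2)$ using the deterministic lower bound $\varepsilon/(C_p(1-\varepsilon))$ on the denominator, expand the square with $2ab \le sa^2 + b^2/s$, add $sp$ from the Brownian increment, and iterate the resulting recursion from $\widetilde{Y}_{t_0}^\varepsilon = 0$. The only cosmetic difference is that you make the geometric-series step and the deterministic nature of the Monte Carlo denominator bound explicit, whereas the paper leaves those implicit.
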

%%%%
\begin{proof}
Define
$\Theta_{k,t}^{\varepsilon}=\widetilde{Y}_{t_k}^{\varepsilon}+(t-t_k)\tilde{b}_m(\widetilde{Y}_{t_k}^{\varepsilon},t_k)$
and $\widetilde{Y}_{t}^{\varepsilon}=\Theta_{k,t}^{\varepsilon}+B_t-B_{t_k}$, where $t_k \leq t \leq t_{k+1}$ with $k=0,1,\ldots,K-1$.
By (\textbf{A1}), then there exists one finite and positive constant $\gamma$ such that  $g$ is $\gamma$-Lipschitz continuous.
%$f_{\varepsilon}\geq \varepsilon >0$,
Then, for all $x \in \mathbb{R}^p$ and $t\in[0,1]$, we have
\begin{align}\label{eq21}
\|b(x,t)\|_2^2\leq \frac{\gamma^2}{(\varepsilon /(C_p-C_p\varepsilon))^2
},~~ \|\tilde{b}_m(x,t)\|_2^2 \leq \frac{\gamma^2}{(\varepsilon /(C_p-C_p\varepsilon))^2
}.
\end{align}
By (\ref{eq21}),  we have
\begin{align*}
\|\Theta_{k,t}^{\varepsilon}\|^2_2&=\|\widetilde{Y}_{t_k}^{\varepsilon}\|^2_2+(t-t_k)^2\|\tilde{b}_m(\widetilde{Y}_{t_k}^{\varepsilon},t_k)\|^2_2
+2(t-t_k)(\widetilde{Y}_{t_k}^{\varepsilon})^{\top}\tilde{b}_m(\widetilde{Y}_{t_k}^{\varepsilon},t_k)\\
&\leq (1+s)\|\widetilde{Y}_{t_k}^{\varepsilon}\|^2_2+\frac{(s+s^2)\gamma^2}{(\varepsilon /(C_p-C_p\varepsilon))^2}.
\end{align*}
Furthermore, it can be shown that
\begin{align*}
\Ebb[\|\widetilde{Y}_{t}^{\varepsilon}\|^2_2|\widetilde{Y}_{t_k}^{\varepsilon}]
&=\Ebb[\|\Theta_{k,t}^{\varepsilon}\|^2_2|\widetilde{Y}_{t_k}^{\varepsilon}]+(t-t_k)p\\
&\leq(1+s)\Ebb\|\widetilde{Y}_{t_k}^{\varepsilon}\|^2_2+\frac{(s+s^2)\gamma^2}{(\varepsilon /(C_p-C_p\varepsilon))^2}+sp.
\end{align*}
Therefore,
\begin{align*}
\Ebb[\|\widetilde{Y}_{t_{k+1}}^{\varepsilon}\|^2_2]
\leq(1+s)\Ebb[\|\widetilde{Y}_{t_k}^{\varepsilon}\|^2_2]+\frac{(s+s^2)\gamma^2}{(\varepsilon /(C_p-C_p\varepsilon))^2}+sp.
\end{align*}
Since $\widetilde{Y}_{t_0}^{\varepsilon}=0$, then by induction, we have
\begin{align*}
\Ebb[\|\widetilde{Y}_{t_{k+1}}^{\varepsilon}\|^2_2]
\leq \frac{6\gamma^2}{(\varepsilon /(C_p-C_p\varepsilon))^2}+3p
\leq\mathcal{O}\left(\frac{(C_p)^2}{\varepsilon^2}\right)+\mathcal{O}\left(p\right).
\end{align*}
\end{proof}
%%%%%%%%%%%%%%%%%%%%%%%%
\begin{lemma}\label{lemma61}
%If $f$  and $\nabla f$ are Lipschitz continuous,
Assume (\textbf{A1})  holds,
then for $k=0,1,\ldots,K$ and $t\in [0,1]$,
\begin{align*}
\Ebb\left\|b(\widetilde{Y}_{t_k}^{\varepsilon},t_k)-\tilde{b}_m(\widetilde{Y}_{t_k}^{\varepsilon},t_k)\right\|_2^2
&\leq
\mathcal{O}\left(\frac{p(C_p)^4}{\sqrt{m}\varepsilon^4}\right)
+\mathcal{O}\left(\frac{(C_p)^4}{\log(m)\varepsilon^4}\right)
+\mathcal{O}\left(\frac{p(C_p)^2}{\log(m)\varepsilon^2}\right),
\end{align*}
where $C_p=(2\pi)^{p/2}C^{-1}$.
Moreover, if $f$  has a finite upper bound, then
\begin{align*}
\Ebb\left\|b(\widetilde{Y}_{t_k}^{\varepsilon},t_k)-\tilde{b}_m(\widetilde{Y}_{t_k}^{\varepsilon},t_k)\right\|_2^2
 \leq \mathcal{O}\left(\frac{p(C_p)^4}{m\varepsilon^4}\right)
+\mathcal{O}\left(\frac{p(C_p)^2}{m\varepsilon^2}\right).
\end{align*}
\end{lemma}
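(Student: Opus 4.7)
The plan is to mimic the proof of Lemma \ref{lemma6}, substituting the regularized analogues (Lemma \ref{lemma41} and Lemma \ref{lemma51}) for Lemma \ref{lemma4} and Lemma \ref{lemma5}, respectively, and carefully optimizing the truncation radius. The first bound, which is the harder one, proceeds by a truncation argument: for any $R>0$, I would split the expectation as
\begin{align*}
\Ebb\bigl\|b(\widetilde{Y}_{t_k}^{\varepsilon},t_k)-\tilde{b}_m(\widetilde{Y}_{t_k}^{\varepsilon},t_k)\bigr\|_2^2
&=\Ebb\bigl[\|\cdot\|_2^2\,\mathbf{1}(\|\widetilde{Y}_{t_k}^{\varepsilon}\|_2\leq R)\bigr]
+\Ebb\bigl[\|\cdot\|_2^2\,\mathbf{1}(\|\widetilde{Y}_{t_k}^{\varepsilon}\|_2>R)\bigr].
\end{align*}

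On the bounded event, I would condition on $\widetilde{Y}_{t_k}^{\varepsilon}$ and apply Lemma \ref{lemma41} uniformly over $\|x\|_2\leq R$, which yields a contribution of order $p\exp(R^2)(C_p)^4/(m\varepsilon^4)+p(C_p)^2/(m\varepsilon^2)$. On the unbounded event, I would use the deterministic pointwise bound $\|b(x,t)\|_2^2,\|\tilde b_m(x,t)\|_2^2\leq \gamma^2 C_p^2/\varepsilon^2$ (times a harmless $(1-\varepsilon)^{-2}$ factor), combined with Markov's inequality and Lemma \ref{lemma51}, to obtain
\[
\Ebb\bigl[\|\cdot\|_2^2\,\mathbf{1}(\|\widetilde{Y}_{t_k}^{\varepsilon}\|_2>R)\bigr]
\leq \frac{C_p^2}{\varepsilon^2}\cdot\frac{\Ebb\|\widetilde{Y}_{t_k}^{\varepsilon}\|_2^2}{R^2}
\leq \mathcal{O}\!\left(\frac{C_p^4}{R^2\varepsilon^4}\right)+\mathcal{O}\!\left(\frac{pC_p^2}{R^2\varepsilon^2}\right).
\]
Finally, choosing $R^2=\tfrac{1}{2}\log(m)$ balances the dominant terms: the $\exp(R^2)$ factor becomes $\sqrt{m}$, producing the first term $p(C_p)^4/(\sqrt{m}\varepsilon^4)$, and the $1/R^2$ factor produces the $1/\log(m)$ terms. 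The remaining $p(C_p)^2/(m\varepsilon^2)$ piece from Lemma \ref{lemma41} is absorbed into $\mathcal{O}(pC_p^2/(\log(m)\varepsilon^2))$.

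For the second bound (with $f$ bounded above), Lemma \ref{lemma41} already provides an $x$-uniform estimate without the $\exp(R^2)$ factor, so no truncation is needed: I would simply take the outer expectation in $\widetilde{Y}_{t_k}^{\varepsilon}$, apply Fubini, and invoke the uniform estimate to conclude directly.

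The main obstacle is not mathematical depth but rather bookkeeping: one must track the $\varepsilon$- and $C_p$-dependence in both Lemma \ref{lemma41} and the uniform pointwise bound on $\|b\|_2$, and verify that after setting $R^2=\tfrac{1}{2}\log(m)$ the three residual terms indeed match the stated rates. Apart from that, the argument is a direct adaptation of the proof of Lemma \ref{lemma6}.
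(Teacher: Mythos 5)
Your proof follows the paper's argument essentially verbatim: the same truncation split, the same application of Lemma \ref{lemma41} on the bounded event, the same combination of the uniform bound \eqref{eq21} with Markov's inequality and Lemma \ref{lemma51} on the tail event, and the same choice $R^2 = \tfrac{1}{2}\log(m)$. The $\varepsilon$- and $C_p$-dependence is tracked correctly throughout, and the bounded-$f$ case is handled in the same way, so the proposal is a faithful reproduction of the paper's proof.
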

\begin{proof}
Let $R>0$, then
\begin{equation}\label{erro11}
\begin{split}
\Ebb\left\|b(\widetilde{Y}_{t_k}^{\varepsilon},t_k)-\tilde{b}_m(\widetilde{Y}_{t_k}^{\varepsilon},t_k)\right\|_2^2
&=\Ebb_{\widetilde{Y}_{t_k}^{\varepsilon}}\Ebb_Z\left[\left\|b(\widetilde{Y}_{t_k}^{\varepsilon},t_k)-\tilde{b}_m(\widetilde{Y}_{t_k}^{\varepsilon},t_k)\right\|_2^21(\|\widetilde{Y}_{t_k}^{\varepsilon}\|_2\leq R)\right]\\
&~~~+\Ebb_{\widetilde{Y}_{t_k}^{\varepsilon}}\Ebb_Z\left[\left\|b(\widetilde{Y}_{t_k}^{\varepsilon},t_k)-\tilde{b}_m(\widetilde{Y}_{t_k}^{\varepsilon},t_k)\right\|_2^21(\|\widetilde{Y}_{t_k}^{\varepsilon}\|_2> R)\right].
\end{split}
\end{equation}
Next, we need to bound the two terms on the right hand of (\ref{erro11}).
First, by Lemma \ref{lemma41}, we have
\begin{align*}
\Ebb_{\widetilde{Y}_{t_k}^{\varepsilon}}\Ebb_Z\left[\left\|b(\widetilde{Y}_{t_k}^{\varepsilon},t_k)-\tilde{b}_m(\widetilde{Y}_{t_k}^{\varepsilon},t_k)\right\|_2^21(\|\widetilde{Y}_{t_k}^{\varepsilon}\|_2\leq R)\right]
%&\leq  \mathcal{O}\left(\frac{p}{m(\varepsilon /(C_p-C_p\varepsilon))^4}\right)\\
%&~~+\mathcal{O}\left(\frac{p(\exp(R^2)+(\varepsilon /(C_p-C_p\varepsilon))^2)}{m(\varepsilon /(C_p-C_p\varepsilon))^4}\right)\\
\leq
\mathcal{O}\left(\frac{p\exp(R^2)(C_p)^4}{m\varepsilon^4}\right)
+\mathcal{O}\left(\frac{p(C_p)^2}{m\varepsilon^2}\right).
\end{align*}
Second, by combining (\ref{eq21}) and Lemma \ref{lemma51} with the Markov inequality, we have
\begin{align*}
\Ebb_{\widetilde{Y}_{t_k}^{\varepsilon}}\Ebb_Z\left[\left\|b(\widetilde{Y}_{t_k}^{\varepsilon},t_k)-\tilde{b}_m(\widetilde{Y}_{t_k}^{\varepsilon},t_k)\right\|_2^21(\|\widetilde{Y}_{t_k}^{\varepsilon}\|_2> R)\right]
%&\leq \mathcal{O}\left(\frac{1}{(\varepsilon /(C_p-C_p\varepsilon))^4R^2}\right)\\
%&~~~+\mathcal{O}\left(\frac{p}{(\varepsilon /(C_p-C_p\varepsilon))^2R^2}\right)\\
&\leq \mathcal{O}\left(\frac{(C_p)^4}{R^2\varepsilon^4}\right)+\mathcal{O}\left(\frac{p(C_p)^2}{R^2\varepsilon^2}\right).
\end{align*}
Therefore,
\begin{align}\label{err11}
\Ebb\left\|b(\widetilde{Y}_{t_k}^{\varepsilon},t_k)-\tilde{b}_m(\widetilde{Y}_{t_k}^{\varepsilon},t_k)\right\|_2^2
%&\leq\mathcal{O}\left(\frac{p(\exp(R^2)}{m(\varepsilon /(C_p-C_p\varepsilon))^4}\right)+\mathcal{O}\left(\frac{p}{m(\varepsilon /(C_p-C_p\varepsilon))^2}\right)\notag\\
%&~~~+\mathcal{O}\left(\frac{1}{(\varepsilon /(C_p-C_p\varepsilon))^4R^2}\right)
%+\mathcal{O}\left(\frac{p}{(\varepsilon /(C_p-C_p\varepsilon))^2R^2}\right)\\
&\leq
\mathcal{O}\left(\frac{p\exp(R^2)(C_p)^4}{m\varepsilon^4}\right)
+\mathcal{O}\left(\frac{p(C_p)^2}{m\varepsilon^2}\right)\notag\\
&~~~+\mathcal{O}\left(\frac{(C_p)^4}{R^2\varepsilon^4}\right)
+\mathcal{O}\left(\frac{p(C_p)^2}{R^2\varepsilon^2}\right).
\end{align}
Setting $R=\left(\frac{\log(m)}{2}\right)^{1/2}$ in (\ref{err11}), we have
\begin{align*}
\Ebb\left\|b(\widetilde{Y}_{t_k}^{\varepsilon},t_k)-\tilde{b}_m(\widetilde{Y}_{t_k}^{\varepsilon},t_k)\right\|_2^2
&\leq
\mathcal{O}\left(\frac{p(C_p)^4}{\sqrt{m}\varepsilon^4}\right)
%+\mathcal{O}\left(\frac{p(C_p)^2}{m\varepsilon^2}\right)\notag\\
+\mathcal{O}\left(\frac{(C_p)^4}{\log(m)\varepsilon^4}\right)
+\mathcal{O}\left(\frac{p(C_p)^2}{\log(m)\varepsilon^2}\right).
\end{align*}
Moreover, if $f$ has a finite upper bound, then by Lemma \ref{lemma41}, we can similarly get
\begin{align*}
\Ebb\left\|b(\widetilde{Y}_{t_k}^{\varepsilon},t_k)-\tilde{b}_m(\widetilde{Y}_{t_k}^{\varepsilon},t_k)\right\|_2^2
&=\Ebb_{\widetilde{Y}_{t_k}^{\varepsilon}}\Ebb_{Z}\left[\left\|b(\widetilde{Y}_{t_k}^{\varepsilon},t_k)-\tilde{b}_m(\widetilde{Y}_{t_k}^{\varepsilon},t_k)\right\|_2^2\right]\\
%&\leq \mathcal{O}\left(\frac{p}{m(\varepsilon /(C_p-C_p\varepsilon))^4}\right)+\mathcal{O}\left(\frac{p}{m(\varepsilon /(C_p-C_p\varepsilon))^2}\right)\\
& \leq \mathcal{O}\left(\frac{p(C_p)^4}{m\varepsilon^4}\right)
+\mathcal{O}\left(\frac{p(C_p)^2}{m\varepsilon^2}\right).
\end{align*}
This completes the proof.
\end{proof}
%%%%%%%%%%%%%%%%%%%%%%%%%%%%%%%%%%%%%%
\subsection{Proof of Theorem \ref{th3}}
\begin{proof}
By triangle inequality, we have  $W_2(Law(\widetilde{Y}_{t_K}^{\varepsilon}),\mu)\leq
W_2(Law(\widetilde{Y}_{t_K}^{\varepsilon}),\mu_{\varepsilon})
+W_2(\mu,\mu_{\varepsilon})$,
then we  obtain the upper bound of two terms on the right hand of this inequality, respectively.

First,
similar to the proof of Theorem \ref{th1}, by Lemma \ref{lemma61} and $\widetilde{Y}_{t_0}^{\varepsilon}=X_{t_0}=0$ and through some calculation, we can conclude that
%\begin{align*}
%\Ebb\|\widetilde{Y}_{t_K}(\varepsilon)-X_{t_K}\|_2^2
%\leq \mathcal{O}(ps)+\mathcal{O}\left(\frac{p(C_p)^4}{\sqrt{m}\varepsilon^4}\right)
%+\mathcal{O}\left(\frac{(C_p)^4}{\log(m)\varepsilon^4}\right)
%+\mathcal{O}\left(\frac{p(C_p)^2}{\log(m)\varepsilon^2}\right).
%\end{align*}
%Therefore,
\begin{align}\label{ERRR1}
W_2(Law(\widetilde{Y}_{t_K}^{\varepsilon}),\mu_{\varepsilon})
\leq
\mathcal{O}(\sqrt{ps})+\mathcal{O}\left(\frac{\sqrt{p}(C_p)^2}{m^{1/4}\varepsilon^2}\right)
+\mathcal{O}\left(\frac{(C_p)^2}{\sqrt{\log(m)}\varepsilon^2}\right)
+\mathcal{O}\left(\frac{\sqrt{p}C_p}{\sqrt{\log(m)}\varepsilon}\right).
\end{align}

Second, we need to get the upper bound of $W_2(\mu,\mu_{\varepsilon})$.
Let $Y\sim \mu$ and $Z\sim N(0,\bI_p)$,  $\theta$ is one Bernoulli random variable satisfying  $P(\theta=1)=1-\varepsilon$ and
$P(\theta=0)=\varepsilon$. Assume $Y$, $Z$ and $\theta$ are independent of each other.
Then $(Y,(1-\theta)Z+\theta Y)$ is one coupling of $(\mu,\mu_{\varepsilon})$, and denote  its joint distribution by $\pi$. Therefore, we have
\begin{align*}
\int_{\mathbb{R}^p\times\mathbb{R}^p}\|x-y\|_2^2d \pi&=\Ebb\left\|Y-((1-\theta)Z+\theta Y)\right\|_2^2\\
&=\Ebb\left[\Ebb\left[\|Y-((1-\theta)Z+\theta Y)\|_2^2|\theta\right]\right]\\
&=\Ebb\left[\Ebb\left[\|Y-((1-\theta)Z+\theta Y)\|_2^2|\theta=1\right]\right]P(\theta=1)\\
&~~~+\Ebb\left[\Ebb\left[\|Y-((1-\theta)Z+\theta Y)\|_2^2|\theta=0\right]\right]P(\theta=0)\\
&=\Ebb[\|Y-Z\|_2^2|\theta=0]P(\theta=0)\\
&=\varepsilon \Ebb\|Y-Z\|_2^2\\
&\leq\mathcal{O}(p\varepsilon).
\end{align*}
Then we have
\begin{align}\label{W2mu}
W_2(\mu,\mu_{\varepsilon})\leq \mathcal{O}(\sqrt{p\varepsilon}).
\end{align}
Combining (\ref{ERRR1}) with (\ref{W2mu}), it yields  that
\begin{align}\label{error1}
&W_2(Law(\widetilde{Y}_{t_K}^{\varepsilon}),\mu)\notag\\
%&~~\leq
%W_2(Law(\widetilde{Y}_{t_K}),\mu_{\varepsilon})
%+W_2(\mu,\mu_{\varepsilon})\notag\\
&~~\leq\mathcal{O}(\sqrt{p\varepsilon})+
\mathcal{O}(\sqrt{ps})+\mathcal{O}\left(\frac{\sqrt{p}(C_p)^2}{m^{1/4}\varepsilon^2}\right)+\mathcal{O}\left(\frac{(C_p)^2}{\sqrt{\log(m)}\varepsilon^2}\right)
+\mathcal{O}\left(\frac{\sqrt{p}C_p}{\sqrt{\log(m)}\varepsilon}\right).
\end{align}
Set $\varepsilon=(\log(m))^{-1/5}$ in (\ref{error1}),  then there exist one constant $\widetilde{C}_p$ depending on $p$ such that
\begin{align*}
W_2(Law(\widetilde{Y}_{t_K}^{\varepsilon}),\mu)
\leq\widetilde{C}_p\cdot\mathcal{O}\left(\frac{1}{(\log(m))^{1/10}}\right)+ \mathcal{O}(\sqrt{ps}).
\end{align*}

Moreover, if $f$ has the finite upper bound, then similar to the proof of Theorem \ref{th2} and by (\ref{W2mu}) and Lemma \ref{lemma61}, we have
\begin{align}\label{error2}
W_2(Law(\widetilde{Y}_{t_K}^{\varepsilon}),\mu)
\leq \mathcal{O}(\sqrt{p\varepsilon})+\mathcal{O}(\sqrt{ps})+
\mathcal{O}\left(\frac{\sqrt{p}(C_p)^2}{\sqrt{m}\varepsilon^2}\right)
+\mathcal{O}\left(\frac{\sqrt{p}C_p}{\sqrt{m}\varepsilon}\right).
\end{align}
Set $\varepsilon=m^{-1/5}$ in (\ref{error2}), then there exists one constant $\widetilde{C}_p$ depending on $p$ such that
\begin{align*}
W_2(Law(\widetilde{Y}_{t_K}^{\varepsilon}),\mu)
\leq\widetilde{C}_p\cdot\mathcal{O}\left(\frac{1}{m^{1/10}}\right)+ \mathcal{O}(\sqrt{ps}).
\end{align*}

\end{proof}

\end{appendix}
%%%%%%%%%%%%%%%%%%%%%%%%%%%%%%%%%%%%%%%%%%
%%%%%%%%%%%%%%%%%%%%%%%%%%%%%%%%%%%%%%%%%%

\bibliographystyle{siam}
\bibliography{ref_bib}

\begin{thebibliography}{10}

\bibitem{barkhagen2018stochastic}
{\sc Mathias Barkhagen, Ngoc~Huy Chau, {E}ric Moulines, Mikl{o}s R{a}sonyi,
  Sotirios Sabanis, and Ying Zhang}, {\em On stochastic gradient langevin
  dynamics with dependent data streams in the logconcave case}, arXiv preprint
  arXiv:1812.02709,  (2018).

\bibitem{bierkens2019zig}
{\sc Joris Bierkens, Paul Fearnhead, Gareth Roberts, et~al.}, {\em The zig-zag
  process and super-efficient sampling for bayesian analysis of big data}, The
  Annals of Statistics, 47 (2019), pp.~1288--1320.

\bibitem{bou2020coupling}
{\sc Nawaf Bou-Rabee, Andreas Eberle, Raphael Zimmer, et~al.}, {\em Coupling
  and convergence for hamiltonian monte carlo}, Annals of Applied Probability,
  30 (2020), pp.~1209--1250.

\bibitem{bouchard2018bouncy}
{\sc Alexandre Bouchard-Cote, Sebastian~J Vollmer, and Arnaud Doucet}, {\em The
  bouncy particle sampler: A nonreversible rejection-free markov chain monte
  carlo method}, Journal of the American Statistical Association, 113 (2018),
  pp.~855--867.

\bibitem{brooks2011handbook}
{\sc Steve Brooks, Andrew Gelman, Galin Jones, and Xiao-Li Meng}, {\em Handbook
  of markov chain monte carlo}, CRC press, 2011.

\bibitem{chau2019stochastic}
{\sc Ngoc~Huy Chau, Eric Moulines, Miklos Rasonyi, Sotirios Sabanis, and Ying
  Zhang}, {\em On stochastic gradient langevin dynamics with dependent data
  streams: the fully non-convex case}, arXiv preprint arXiv:1905.13142,
  (2019).

\bibitem{chen2021stochastic}
{\sc Yongxin Chen, Tryphon~T Georgiou, and Michele Pavon}, {\em Stochastic
  control liaisons: Richard sinkhorn meets gaspard monge on a schrodinger
  bridge}, SIAM Review, 63 (2021), pp.~249--313.

\bibitem{cheng2018convergence}
{\sc Xiang Cheng and Peter Bartlett}, {\em Convergence of langevin mcmc in
  kl-divergence}, Proceedings of Machine Learning Research, Volume 83:
  Algorithmic Learning Theory,  (2018), pp.~186--211.

\bibitem{cheng2018sharp}
{\sc Xiang Cheng, Niladri~S Chatterji, Yasin Abbasi-Yadkori, Peter~L Bartlett,
  and Michael~I Jordan}, {\em Sharp convergence rates for langevin dynamics in
  the nonconvex setting}, arXiv preprint arXiv:1805.01648,  (2018).

\bibitem{dai1991stochastic}
{\sc Paolo Dai~Pra}, {\em A stochastic control approach to reciprocal diffusion
  processes}, Applied mathematics and Optimization, 23 (1991), pp.~313--329.

\bibitem{dalalyan2017further}
{\sc Arnak~S Dalalyan}, {\em Further and stronger analogy between sampling and
  optimization: Langevin monte carlo and gradient descent}, arXiv: Statistics
  Theory,  (2017).

\bibitem{dalalyan2017theoretical}
\leavevmode\vrule height 2pt depth -1.6pt width 23pt, {\em Theoretical
  guarantees for approximate sampling from smooth and log-concave densities},
  Journal of the Royal Statistical Society: Series B (Statistical Methodology),
  79 (2017), pp.~651--676.

\bibitem{dalalyan2019user-friendly}
{\sc Arnak~S Dalalyan and Avetik~G Karagulyan}, {\em User-friendly guarantees
  for the langevin monte carlo with inaccurate gradient}, Stochastic Processes
  and their Applications, 129 (2019), pp.~5278--5311.

\bibitem{dunson2020hastings}
{\sc David~B Dunson and JE~Johndrow}, {\em The hastings algorithm at fifty},
  Biometrika, 107 (2020), pp.~1--23.

\bibitem{durmus2016high-dimensional}
{\sc Alain Durmus and Eric Moulines}, {\em High-dimensional bayesian inference
  via the unadjusted langevin algorithm.}, arXiv: Statistics Theory,  (2016).

\bibitem{durmus2016sampling}
\leavevmode\vrule height 2pt depth -1.6pt width 23pt, {\em Sampling from a
  strongly log-concave distribution with the unadjusted langevin algorithm},
  arXiv: Statistics Theory,  (2016).

\bibitem{durmus2017nonasymptotic}
{\sc Alain Durmus, Eric Moulines, et~al.}, {\em Nonasymptotic convergence
  analysis for the unadjusted langevin algorithm}, The Annals of Applied
  Probability, 27 (2017), pp.~1551--1587.

\bibitem{eldan2020}
{\sc Ronen Eldan, Joseph Lehec, Yair Shenfeld, et~al.}, {\em Stability of the
  logarithmic sobolev inequality via the follmer process}, in Annales de
  l'Institut Henri Poincar{e}, Probabilit{e}s et Statistiques, vol.~56,
  Institut Henri Poincar{e}, 2020, pp.~2253--2269.

\bibitem{follmer1985}
{\sc Hans Follmer}, {\em An entropy approach to the time reversal of diffusion
  processes}, in Stochastic Differential Systems Filtering and Control,
  Springer, 1985, pp.~156--163.

\bibitem{follmer1986}
\leavevmode\vrule height 2pt depth -1.6pt width 23pt, {\em Time reversal on
  wiener space}, in Stochastic processes-mathematics and physics, Springer,
  1986, pp.~119--129.

\bibitem{follmer1988}
\leavevmode\vrule height 2pt depth -1.6pt width 23pt, {\em Random fields and
  diffusion processes}, in Ecole dEte de Probabilites de Saint-Flour XV--XVII,
  1985--87, Springer, 1988, pp.~101--203.

\bibitem{gelfand1990sampling}
{\sc Alan~E Gelfand and Adrian~FM Smith}, {\em Sampling-based approaches to
  calculating marginal densities}, Journal of the American statistical
  association, 85 (1990), pp.~398--409.

\bibitem{geman1984stochastic}
{\sc Stuart Geman and Donald Geman}, {\em Stochastic relaxation, gibbs
  distributions, and the bayesian restoration of images}, IEEE Transactions on
  pattern analysis and machine intelligence,  (1984), pp.~721--741.

\bibitem{hale2010asymptotic}
{\sc Jack~K Hale}, {\em Asymptotic behavior of dissipative systems}, no.~25,
  American Mathematical Soc., 2010.

\bibitem{hastings1970monte}
{\sc W~Keith Hastings}, {\em Monte carlo sampling methods using markov chains
  and their applications},  (1970).

\bibitem{sfs21}
{\sc Jian Huang, Yuling Jiao, Lican Kang, Xu~Liao, Jin Liu, and Yanyan Liu},
  {\em Schr{o}dinger-follmer sampler: Sampling without ergodicity}, arXiv
  preprint arXiv: 2106.10880,  (2021).

\bibitem{lehec2013representation}
{\sc Joseph Lehec}, {\em Representation formula for the entropy and functional
  inequalities}, in Annales de l'IHP Probabilit{e}s et statistiques, vol.~49,
  2013, pp.~885--899.

\bibitem{leonard2014survey}
{\sc Christian L{e}onard}, {\em A survey of the schr{o}dinger problem and some
  of its connections with optimal transport}, Discrete Continuous Dynamical
  Systems-A, 34 (2014), p.~1533.

\bibitem{liu2008monte}
{\sc Jun~S Liu}, {\em Monte Carlo strategies in scientific computing}, Springer
  Science Business Media, 2008.

\bibitem{ma2019sampling}
{\sc Yi-An Ma, Yuansi Chen, Chi Jin, Nicolas Flammarion, and Michael~I Jordan},
  {\em Sampling can be faster than optimization}, Proceedings of the National
  Academy of Sciences, 116 (2019), pp.~20881--20885.

\bibitem{menz2014poincare}
{\sc Georg Menz, Andr{e} Schlichting, et~al.}, {\em Poincar{e} and logarithmic
  sobolev inequalities by decomposition of the energy landscape}, Annals of
  Probability, 42 (2014), pp.~1809--1884.

\bibitem{metropolis1953equation}
{\sc Nicholas Metropolis, Arianna~W Rosenbluth, Marshall~N Rosenbluth,
  Augusta~H Teller, and Edward Teller}, {\em Equation of state calculations by
  fast computing machines}, The journal of chemical physics, 21 (1953),
  pp.~1087--1092.

\bibitem{mou2019improved}
{\sc Wenlong Mou, Nicolas Flammarion, Martin~J Wainwright, and Peter~L
  Bartlett}, {\em Improved bounds for discretization of langevin diffusions:
  Near-optimal rates without convexity}, arXiv preprint arXiv:1907.11331,
  (2019).

\bibitem{raginsky2017non}
{\sc Maxim Raginsky, Alexander Rakhlin, and Matus Telgarsky}, {\em Non-convex
  learning via stochastic gradient langevin dynamics: a nonasymptotic
  analysis}, in Conference on Learning Theory, PMLR, 2017, pp.~1674--1703.

\bibitem{robert2010introducing}
{\sc Christian~P Robert, George Casella, and George Casella}, {\em Introducing
  monte carlo methods with r}, vol.~18, Springer, 2010.

\bibitem{schrodinger1932theorie}
{\sc Erwin Schr{o}dinger}, {\em Sur la th{e}orie relativiste de l{e}lectron et
  l'interpr{e}tation de la m{e}canique quantique}, in Annales de l'institut
  Henri Poincar{e}, vol.~2, 1932, pp.~269--310.

\bibitem{Tierney1994Markov}
{\sc Luke Tierney}, {\em Markov chains for exploring posterior distributions},
  The Annals of Statistics, 22 (1994), pp.~1701--1728.

\bibitem{tzen2019theoretical}
{\sc Belinda Tzen and Maxim Raginsky}, {\em Theoretical guarantees for sampling
  and inference in generative models with latent diffusions}, in Conference on
  Learning Theory, PMLR, 2019, pp.~3084--3114.

\bibitem{wang2009log}
{\sc Feng-Yu Wang et~al.}, {\em Log-sobolev inequalities: different roles of
  ric and hess}, The Annals of Probability, 37 (2009), pp.~1587--1604.

\bibitem{zhang2019nonasymptotic}
{\sc Ying Zhang, Omer~Deniz Akyildiz, Theo Damoulas, and Sotirios Sabanis},
  {\em Nonasymptotic estimates for stochastic gradient langevin dynamics under
  local conditions in nonconvex optimization}, arXiv preprint arXiv:1910.02008,
   (2019).

\end{thebibliography}

\end{document}